\tikzset{
	photon/.style={decorate, decoration={snake}, draw=red},
	electron/.style={draw=blue, postaction={decorate},
		decoration={markings,mark=at position .55 with {\arrow[draw=blue]{>}}}},
	gluon/.style={decorate, draw=magenta,
		decoration={coil,amplitude=4pt, segment length=5pt}},
	sderiv/.style={postaction={decorate},
		decoration={markings,mark=at position .3 with {\arrow{>}}}},
	tderiv/.style={postaction={decorate},
		decoration={markings,mark=at position .7 with {\arrow{<}}}},
	stderiv/.style={postaction={decorate},
		decoration={markings,mark=at position .7 with {\arrow{<}},mark=at position .3 with {\arrow{>}}}}
}
\definecolor{see}{RGB}{67,75,179}
\definecolor{darksee}{RGB}{42,44,148}
\definecolor{honey}{RGB}{232,180,129}
\definecolor{lighthoney}{RGB}{255,254,220}
\definecolor{citecol}{rgb}{0.5,0,0} 
\definecolor{blue1}{RGB}{130,150,209}
\def\tang{\ThisStyle{\abovebaseline[0pt]{\scalebox{-1}{$\SavedStyle\perp$}}}}
\definecolor{blue1}{RGB}{130,150,209}
\definecolor{blue2}{RGB}{42,60,122}
\definecolor{see}{RGB}{67,75,179}
\newcommand{\fA}{\mathfrak{A}}
\newcommand{\euR}{\mathscr{R}}  
\newcommand{\euS}{\mathscr{S}}
\newcommand{\euL}{\mathscr{L}}
\newcommand{\Ccal}{\mathcal{C}}
\newcommand{\Dcal}{\mathcal{D}}
\newcommand{\Ecal}{\mathcal{E}} 
\newcommand{\Fcal}{\mathcal{F}}
\newcommand{\Ncal}{\mathcal{N}}
\newcommand{\Ocal}{\mathcal{O}}
\newcommand{\Scal}{\mathcal{S}}
\newcommand{\Pcal}{\mathcal{P}}
\newcommand{\Tcal}{\mathcal{T}}
\newcommand{\Vcal}{\mathcal{V}}
\newcommand{\Xcal}{\mathcal{X}}
\newcommand{\Zcal}{\mathcal{Z}}
\newcommand{\Ycal}{\mathcal{Y}}
\newcommand{\Ci}{\mathcal{C}^\infty} 
\newcommand{\WF}{\mathrm{WF}}         
\newcommand{\id}{\mathrm{id}}               
\newcommand{\loc}{\mathrm{loc}}
\newcommand{\mc}{{\mu\mathrm{c}}}
\newcommand{\ml}{\mathrm{ml}}
\newcommand{\NN}{\mathbb{N}}          
\newcommand{\RR}{\mathbb{R}}           
\newcommand{\CC}{\mathbb{C}}           
\newcommand{\KK}{\mathbb{K}}   
\newcommand{\Co}{{\sst \CC}}
\newcommand{\M}{\mathbb{M}} 	     
\newcommand{\la}{\lambda}
\newcommand{\ph}{\varphi}
\newcommand{\T}{\cdot_{{}^\Tcal}}
\newcommand{\TT}{\Tcal}
\newcommand{\sst}[1]{\scriptscriptstyle{#1}}  
\newcommand{\1}{1}                         
\newcommand{\be}{\begin{equation}}
\newcommand{\ee}{\end{equation}}
\DeclareMathOperator{\supp}{supp}      
\theoremstyle{plain}
\newtheorem{thm}{Theorem}[section]
\newtheorem{prop}[thm]{Proposition}
\newtheorem{obs}[thm]{Observation}
\newtheorem{lemma}[thm]{Lemma}
\theoremstyle{definition}
\newtheorem{df}[thm]{Definition}
\theoremstyle{remark}
\newtheorem{rem}[thm]{Remark}
\begin{document}		
		\title{Locality and causality in perturbative algebraic quantum field theory}
	\author{Kasia Rejzner}
	\date{\today}

	\maketitle
	
	\begin{abstract}
		In this paper we discuss how seemingly different notions of locality and causality in quantum field theory can be unified using a non-abelian generalization of the Hammerstein property (originally introduced as a weaker version of linearity). We also prove a generalization of the main theorem of renormalization, in which we do not require field independence.
	\end{abstract}

	\tableofcontents

\section{Introduction}
In this article we show how the different meanings of the term ``locality'' appearing in quantum field theory (QFT) can be described using appropriate versions of \textit{the Hammerstein property} (term used in \cite{Batt} to describe a weaker notion of linearity), when applied to groups and vectors spaces equipped with binary relations of a certain kind.
This is based on ideas proposed in \cite{BDF,CGPZ,CGPZ18}. The key new result of the paper is theorem \ref{thm:main}, which is a generalization of the main theorem of renormalization proven in \cite{BDF}, where we drop some of the axioms and state the theorem in a more general context, without reference to functionals on the space of fields.

 It was argued in \cite{BDF} that the notion of locality for smooth functionals on the space of field configurations is captured by the following property (Hammerstein property):
\be\label{add}
F(\ph_1+\ph_2+\ph_3)=F(\ph_1+\ph_2)+F(\ph_2+\ph_3)-F(\ph_2)\,,
\ee
if $\supp(\ph_1)\cap\supp(\ph_3)=\varnothing$, where $\ph_1,\ph_2,\ph_3\in \Ci(M,\RR)$, $F\in\Ci(\Ci(M,\RR),\CC)$, and $M$ is a Lorentzian manifold (here we typically assume $M$ to be globally hyperbolic) In \cite{BDF} this property was termed \textit{Additivity}, as it can be seen as generalization of linearity. Indeed, it is clearly satisfied for $F$ linear.

For $F$ with $F(0)=0$, property \eqref{add} implies that
\be\label{padd}
F(\ph_1+\ph_3)=F(\ph_1)+F(\ph_3)\,,
\ee
if $\supp(\ph_1)\cap\supp(\ph_3)=\varnothing$. This condition is called \textit{disjoint additivity} (in \cite{BDGR} referred to as \textit{partial additivity}). It has been shown in \cite{BDGR} (by a counter example) that this condition is strictly weaker than \eqref{add}.  However, for polynomial unit-preserving (i.e. $F(0)=0$) functionals \eqref{add} and \eqref{padd} are equivalent, as shown in \cite{BDF}.

In \cite{BDGR} it was shown, following the sketch of the proof given in an earlier version of \cite{BFR}, that \eqref{add} together with an additional regularity condition that we will recall in section~\ref{sec:smooth:func} is equivalent to saying that $F$ can be written as an integral of some smooth function on the jet space (this is \textit{locality} for functionals). In the final version of  \cite{BFR}, this condition was slightly weakened.

In a seemingly different context, the term \textit{locality} is used in AQFT to express the fact that local algebras $\fA(\Ocal_1)$ and $\fA(\Ocal_2)$, assigned to bounded regions $\Ocal_1,\Ocal_2\subset\M$ of Minkowski spacetime $\M$, commute:
\[
[\fA(\Ocal_1),\fA(\Ocal_2)]=\{0\}\,,
\]
if $\Ocal_1$ is spacelike to $\Ocal_2$. This property is also called \textit{Einstein causality}. We will show here that these two notions of locality can be brought together in perturbative AQFT (pAQFT), when applied to formal S-matrices.

We claim that the physical notion of \textit{locality} is well captured by what we refer to as the \textit{generalized Hammerstein property}.
\section{Locality and causality structures}
In \cite{CGPZ} the authors introduce an abstract notion of locality captured by the definition of the \textit{locality set}:

\begin{df}\begin{enumerate}[(a)]\label{df:localityset}
\item	A locality set is a pair $(X,\tang)$ where $X$ is a set and $\tang$ is a symmetric binary relation on $X$.
For $x_1, x_2 \in X$, denote $x_1\tang x_2$ if $(x_1,x_2) \in\tang$. 
\item For any subset $U \subset X$, let
\[U^{\tang} := \{x \in X |\, x\tang y\,\ \textrm{for all}\ y\in U\}\]
	denote the polar subset of $U$.
\item	For integers $k \geq 2$, denote
\[	
X^{\tang k}\equiv X\times_{\tang}\dots \times_{\tang} X:= \{(x_1,\dots x_k) \in X^k |\  x_i\tang x_j\,,\ 1 \leq i \neq j \leq k\}\,.
\]
\item We call two subsets $A$ and $B$ of a locality set $(X,\tang)$, \textit{independent}, if $A \times B \subset \tang$.
\end{enumerate}
\end{df}

An example of such relation is disjointness of sets. In particular, the primary example we are interested in is the power set $\Pcal(M)$, of a manifold $M$ equipped with the \textit{disjointness of sets} relation $\tang$.

 If $M$ is an oriented, time-oriented Lorentzian spacetime ($M$ is equipped with a smooth Lorentzian metric $g$, a nondegenerate volume form and a smooth vector field $u$ on $M$ such that for every $p\in M$, $g(u,u)>0$) the notion of disjointness of sets can be refined with the use of the causal structure.  A curve $\gamma:\RR\supset I\rightarrow M$ with a tangent vector $\dot{\gamma}$ is:
\begin{itemize}
	\item spacelike if $g(\dot{\gamma},\dot{\gamma})<0$,
	\item  timelike if $g(\dot{\gamma},\dot{\gamma})>0$,
	\item  lightlike if $g(\dot{\gamma},\dot{\gamma})=0$,
	\item  causal if $g(\dot{\gamma},\dot{\gamma})\geq0$.
\end{itemize}
A causal/lightlike/timelike curve $\gamma$ is called \textit{future-pointing} if $g(u,\dot{\gamma})>0$, where $u$ is the time orientation.
\begin{df}
 	Let $x\in M$, where $M$ is a Lorentzian oriented, time-oriented spacetime. Let $J^\pm(x)$, the future/past of $x$, be the set of all points $y\in M$ such that there exists a future/past pointing causal curve from $x$ to $y$. 
 \end{df}
\begin{df}
	Let $\Ocal_1,\Ocal_2\subset M$, where $M$ is a Lorentzian manifold. One says that $\Ocal_1$ ``is not later than'' $\Ocal_2$, i.e. $\Ocal_1\preceq \Ocal_2$ if and only if $\Ocal_1\cap J^+(\Ocal_2)=\varnothing$.
\end{df}
This relation is crucial for formulation of the causal factorization property of formal S-matrices (we will come to this in Section~\ref{FS}). Clearly, this is not a symmetric relation, but it can be symmetrized to obtain:
\begin{obs}
	Symmetrization of  $\preceq$ is the relation of being \textit{spacelike}, $\bigtimes$, i.e. 
	\[
	\Ocal_1\bigtimes\Ocal_2 \Leftrightarrow \Ocal_1\preceq\Ocal_2 \wedge \Ocal_2\preceq\Ocal_1\,.
	\]
\end{obs}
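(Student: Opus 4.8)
The plan is to prove the statement by directly unwinding the definitions of $\preceq$ and of spacelike separation; the only non-immediate ingredient is the time-reversal symmetry that exchanges the roles of $J^+$ and $J^-$.

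First I would recall that, by definition, $\Ocal_1\preceq\Ocal_2$ says $\Ocal_1\cap J^+(\Ocal_2)=\varnothing$, i.e.\ no future-pointing causal curve runs from a point of $\Ocal_2$ to a point of $\Ocal_1$, and likewise $\Ocal_2\preceq\Ocal_1$ says $\Ocal_2\cap J^+(\Ocal_1)=\varnothing$. The key step is the elementary lemma that reversing the parametrisation of a future-pointing causal curve $\gamma$ from $x$ to $y$ yields a past-pointing causal curve $\bar\gamma$ from $y$ to $x$: along $\bar\gamma$ one has $g(\dot{\bar\gamma},\dot{\bar\gamma})=g(\dot\gamma,\dot\gamma)\ge 0$ while $g(u,\dot{\bar\gamma})=-g(u,\dot\gamma)<0$. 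Hence $y\in J^+(x)\iff x\in J^-(y)$, so that $\Ocal_2\cap J^+(\Ocal_1)\neq\varnothing$ is equivalent to the existence of $x\in\Ocal_1$, $y\in\Ocal_2$ with $x\in J^-(y)$, i.e.\ to $\Ocal_1\cap J^-(\Ocal_2)\neq\varnothing$; thus $\Ocal_2\preceq\Ocal_1\iff\Ocal_1\cap J^-(\Ocal_2)=\varnothing$.

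Then I would combine the two conditions: $\Ocal_1\preceq\Ocal_2\wedge\Ocal_2\preceq\Ocal_1$ holds iff $\Ocal_1\cap J^+(\Ocal_2)=\varnothing$ and $\Ocal_1\cap J^-(\Ocal_2)=\varnothing$, i.e.\ iff $\Ocal_1\cap\big(J^+(\Ocal_2)\cup J^-(\Ocal_2)\big)=\varnothing$. Since a point is joined to $\Ocal_2$ by a causal curve precisely when it lies in $J^+(\Ocal_2)\cup J^-(\Ocal_2)$, this last condition is exactly the assertion that no causal curve connects $\Ocal_1$ with $\Ocal_2$, which is the meaning of $\Ocal_1\bigtimes\Ocal_2$ (being spacelike); symmetry of $\bigtimes$ is then manifest from the symmetric form of the right-hand side.

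I do not expect a serious obstacle: the argument is essentially bookkeeping of definitions, the only genuine content being the time-reversal lemma above. The single point that deserves a word of care is the degenerate case $\Ocal_1\cap\Ocal_2\neq\varnothing$: if one wants ``spacelike'' to also encode disjointness one should read $J^\pm$ as the \emph{closed} causal future/past, so that this case is excluded on both sides; the equivalence with the symmetrised relation then goes through verbatim via the lemma. Everything else is immediate.
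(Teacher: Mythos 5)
Your proposal is correct: the paper states this observation without proof, treating it as immediate from the definitions, and your argument is exactly that direct unwinding --- the time-reversal lemma giving $y\in J^+(x)\iff x\in J^-(y)$, and then the combination of $\Ocal_1\cap J^+(\Ocal_2)=\varnothing$ with $\Ocal_1\cap J^-(\Ocal_2)=\varnothing$ into the statement that no causal curve joins the two regions, which is the intended meaning of spacelike separation. Your remark on the degenerate case is also consistent with the paper's convention, since there $\Ocal\cap J^\pm(\Ocal)\neq\varnothing$ for every subset, so overlapping regions are never $\preceq$-related and hence never spacelike on either reading.
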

There are two features of $\simeq$ that we want to single out as important: each subset $\Ocal\subset M$ has non-zero intersection with both the past and the future of itself, so $\Ocal\sim\preceq \Ocal$. There exist subsets for which the relation is non symmetric, i.e. $\Ocal_1\preceq \Ocal_2$, but  $\Ocal_2\npreceq \Ocal_1$ (existence of ``preferred direction'').

Motivated by the above example we define:
\begin{df}
	\begin{enumerate}[(a)]
		\item 	A causality set is $(X,\dashv)$, where $X$ is a set and $\dashv$ a relation such that for all $x\in X$ $x \sim\dashv x$ (the negation of $\dashv$ is reflexive) and $\dashv$ is not symmetric (i.e. there exist $x,y\in X$ such that $x\dashv y$ and $y\sim\dashv x$).
		\item For any subset $U \subset X$, let
		\begin{align*}
		{}^{\dashv}U &:= \{x \in X | x \dashv y, \textrm{ for all }y\in U\}\,,\\
		U^{\dashv}&:= \{x \in X | y \dashv x, \textrm{ for all }y\in U\}\,.
		\end{align*}
		
	\end{enumerate}
\end{df}
\begin{obs}
	Given a causality set $(X,\dashv)$ we obtain a locality set by symmetrizing $\dashv$.
\end{obs}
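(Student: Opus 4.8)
The plan is to make the phrase ``symmetrizing $\dashv$'' precise---following the pattern by which $\preceq$ was turned into the spacelike relation in the preceding Observation---and then to verify the single requirement of Definition~\ref{df:localityset}(a), namely that the resulting relation is symmetric.

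Concretely, given a causality set $(X,\dashv)$ I would set $\tang := {\dashv}\cap{\dashv}^{-1}$, where ${\dashv}^{-1}=\{(y,x):(x,y)\in{\dashv}\}$ is the converse relation; equivalently,
\[
x\tang y \quad\Longleftrightarrow\quad x\dashv y \ \text{ and }\ y\dashv x .
\]
This is exactly the recipe producing the spacelike relation from $\preceq$ in the Lorentzian example ($\Ocal_1$ spacelike to $\Ocal_2$ iff $\Ocal_1\preceq\Ocal_2$ and $\Ocal_2\preceq\Ocal_1$). The defining clause ``$x\dashv y$ and $y\dashv x$'' is plainly unchanged under interchanging $x$ and $y$, so $\tang$ is a symmetric binary relation on the set $X$. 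Since Definition~\ref{df:localityset}(a) asks for nothing beyond a set equipped with a symmetric binary relation, $(X,\tang)$ is a locality set, and the polar-subset and independence notions of Definition~\ref{df:localityset} become available for it.

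There is no genuine obstacle here: the only point requiring care is fixing which ``symmetrization'' is meant, and once it is taken to be intersection with the converse, symmetry is immediate. I would close with the remark that the irreflexivity-type property of a causality set is inherited---if $x\sim\dashv x$ for every $x\in X$ then a fortiori $x\sim\tang x$ for every $x\in X$, so no element of the resulting locality set is independent of itself, mirroring the behaviour of disjointness on $\Pcal(M)$---whereas the non-symmetry of $\dashv$ is, of necessity, lost in passing to $\tang$.
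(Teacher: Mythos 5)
Your proof is correct and matches the paper's intended reading: the paper later makes the same symmetrization explicit ($f_1\tang f_2$ iff $f_1\dashv f_2$ and $f_2\dashv f_1$), and since a locality set requires nothing beyond a symmetric binary relation, the verification is exactly the one-line symmetry check you give.
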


In \cite{CGPZ}, the authors introduce the notion of a \textit{locality group}, which is a set $(G,\tang)$ together with a product law $m_G$ defined on $\tang$, for which the product is compatible with the locality relation on $G$ and the other group properties (associativity, existence of the unit, inverse) hold in a restricted sense, again compatible with $\tang$. This structure is an analogue of a \textit{partial algebra} \cite{KS75}. In fact, a partial algebra of functionals equipped with the time-ordered product discussed in \cite{KS75} is a locality algebra in the sense of \cite{CGPZ}. We will discuss this in more detail in section \ref{sec:conc:mod}.

However, in our context, we will need to equip a locality set $(G,\tang)$ with a full group structure (without restricting to $\tang$), compatible with $\tang$. To make a distinction to \cite{CGPZ} we will call this a \textit{group with locality}.

\begin{df} \label{defn:lgr}
	A \textit{group with locality}  $(G,\tang,m_G,\1_G)$ is a group $(G,m_G,\1_G)$ and a locality set $(G,\tang)$, for which the product law $m_G$ is compatible with $\tang$, i.e.
		\begin{equation}\text{ for all } U\subseteq G, \quad  m_G((U^{\tang}\times U^{\tang})\cap{\tang})\subset U^{\tang}
		\end{equation}
		and such that $\{\1_G\}^{\tang}=G$.
\end{df}
Similarly one can also define \textit{vector spaces with locality}. The obvious analog of definition~\ref{defn:lgr} coincides in fact with what \cite{CGPZ} call \textit{locality vector spaces}.
\begin{df}\label{defn:lvs}
		A {\it vector space with locality (locality vector space)} is a vector space $V$ equipped with a locality relation $\tang$ which is compatible with the linear structure on $V$ in the sense that, for any  subset $X$ of $V$, $X^{\tang}$ is a linear subspace of $V$.
\end{df}

Let $M$ be a Lorentzian manifold.
The set $\Ecal(M)\doteq \Ci(M,\RR)$ together with the disjointness of supports relation $\tang$ and with addition $(f,g)\mapsto f+g$ forms a  group with locality $(\Ecal(M),\tang,+,0)$. Moreover, equipped with multiplication by scalars in $\RR$, it is a locality vector space.

One can formulate the notion of disjoint additivity in terms of morphisms of groups with locality. We recall after \cite{CGPZ}:
\begin{df}
	A \emph{locality map} from a locality set $\left(X,{\tang}_X\right)$ to a locality set $ (Y, {\tang}_Y)$ is a map $\phi:X\to Y$ such that $(\phi\times \phi)({\tang}_X)\subseteq {\tang}_Y$. 
	\label{defn:localmap}
\end{df}
With a slight modification of the definition of \emph{locality morphisms} from \cite{CGPZ}, we introduce
\begin{df} Let $\left(X,{\tang}_X,\cdot_X,\1_X  \right) $ and $\left(Y,{\tang}_Y,\cdot_Y,\1_Y  \right) $ be groups with locality. A map $\phi:X\longrightarrow Y$ is called a \textit{morphism of groups with locality}, if it
	\begin{enumerate}
		\item is a locality map;
		\item is \emph{locality multiplicative}:
		for $({ a},{ b})\in {\tang}_X$ we have
		$\phi({ a}\cdot_X{ b})= \phi({ a})\cdot_Y\phi({ b})$,
		\item preserves the unit  $\phi(\1_X)=\1_Y$.
	\end{enumerate}
\end{df} 

\begin{df}
Let $\left(X,{\tang}_X,+_X,0_X,\cdot_X,\KK \right) $ and $\left(Y,{\tang}_Y,+_Y,0_Y,\cdot_Y,\KK \right) $  be vector spaces with locality over a field $\KK$. A map $\phi:X\longrightarrow Y$ is called a \textit{morphism of vector spaces with locality}, if it is a morphism of groups with locality between $\left(X,{\tang}_X,+_X,0_X\right) $ and $\left(Y,{\tang}_Y,+_Y,0_Y\right) $.
\end{df}

We equip $\RR$ with the trivial locality relation $\tang_\RR=\RR\times \RR$ and the locality group structure $(\RR,\tang_{\RR},+,0)$. 
\begin{obs} A functional $F:\Ecal(M)\rightarrow \RR$ is disjoint additive \eqref{padd} and satisfies  $F(0)=0$, if and only if $F$ is a morphism with locality between the groups $(\Ecal(M),\tang,+,0)$ and $(\RR,\tang_\RR,+,0)$.
\end{obs}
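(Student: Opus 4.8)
The plan is to unfold both sides of the claimed equivalence against the three things in play: the definition of \textit{disjoint additivity} \eqref{padd}, the definition of a \textit{morphism of groups with locality}, and the group-with-locality structures $(\Ecal(M),\tang,+,0)$ and $(\RR,\tang_\RR,+,0)$. The argument is essentially bookkeeping; the only genuine observation is that the target $\RR$ carries the \emph{full} locality relation $\tang_\RR=\RR\times\RR$, so the locality-map clause of ``morphism'' imposes no condition on that side.

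First I would treat the implication ``$\Rightarrow$''. Assuming $F$ satisfies \eqref{padd} and $F(0)=0$: preservation of units is immediate, since $0$ is the unit of the group $(\Ecal(M),+,0)$ and $F(0)=0=0_\RR$; the locality-map condition $(F\times F)(\tang)\subseteq\tang_\RR$ holds vacuously because $\tang_\RR=\RR\times\RR$; and locality multiplicativity, which for the additively written group $(\Ecal(M),\tang,+,0)$ reads $F(f+g)=F(f)+F(g)$ for every pair $(f,g)\in\tang$, i.e.\ for all $f,g$ with $\supp f\cap\supp g=\varnothing$, is exactly \eqref{padd}.

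For ``$\Leftarrow$'', I would assume $F$ is a morphism of groups with locality from $(\Ecal(M),\tang,+,0)$ to $(\RR,\tang_\RR,+,0)$. Preservation of the unit gives $F(0)=0$, and locality multiplicativity (again specialized to the product $+$ of $\Ecal(M)$) gives precisely \eqref{padd} for disjointly supported $f,g$.

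I do not anticipate any real obstacle here, since the statement is a direct translation of definitions. The one point worth emphasizing in the writeup is that, with the full relation chosen on $\RR$, the locality-map clause is trivialized, so being a morphism extracts \emph{only} the multiplicative (here additive) clause; this is exactly why the categorical notion captures disjoint additivity \eqref{padd} rather than a strictly stronger property such as \eqref{add}.
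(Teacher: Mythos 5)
Your proposal is correct and matches the intended argument: the paper states this as an Observation precisely because it is a direct unwinding of the definitions, with unit preservation giving $F(0)=0$, locality multiplicativity for the additive groups giving \eqref{padd}, and the locality-map clause trivialized by $\tang_\RR=\RR\times\RR$. Nothing more is needed.
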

Similar to a group with locality, we can also define a \textit{group with causality}. This notion will be crucial in section \ref{FS}.

\begin{df} \label{defn:cgr}
	A \textit{group with causality}  $(G,\dashv,m_G,\1_G)$ is a group $(G,m_G,\1_G)$ and a causality set $(G,\dashv)$, for which the product law $m_G$ is compatible with $\dashv$, i.e.
\begin{itemize}
	\item if $x_1\dashv y$ and $x_2\dashv y$, then $m_G(x_1,x_2)\dashv y$.
	\item if $y\dashv x_1$ and $y\dashv x_2$, then $y\dashv m_G(x_1,x_2)$.
	\item $\{\1_G\}^{\dashv}=G$ and ${}^{\dashv}\{\1_G\}=G$.
\end{itemize}
\end{df}
\begin{df} Let $\left(X,{\dashv}_X,\cdot_X,\1_X  \right) $ and $\left(Y,{\dashv}_Y,\cdot_Y,\1_Y  \right) $ be groups with causality. A map $\phi:X\longrightarrow Y$ is called a \textit{morphism of groups with causality}, if it
	\begin{enumerate}
		\item is a causality map (i.e. if $a\dashv_X b$ then  $\phi(a)\dashv_Y \phi(b)$);
		\item is \emph{causality multiplicative}:
		for $b\dashv_X a$ we have
		$\phi({ a}\cdot_X{ b})= \phi({ a})\cdot_Y\phi({ b})$,
		\item preserves the unit  $\phi(\1_X)=\1_Y$.
	\end{enumerate}
\end{df}
\begin{rem}
	The order of factors in the definition of being \textit{causality multiplicative} is chosen to agree with the convention used in pAQFT.
\end{rem}
\begin{df}\label{defn:cvs}
	A {\it vector space with causality} is a vector space $V$ equipped with a causality relation $\tang$ which is compatible with the linear structure on $V$. A morphism is called a \textit{morphism of vector spaces with causality} if it is a group with causality morphism for underlying additive groups.
\end{df}

A notion stronger than being locality or causality multiplicative is provided by the \textit{Hammerstein property}. Here we state it for groups, but one can also introduce it in other contexts, e.g. vector spaces. First we need one more definition.
\begin{df}\label{def:relative}
	Let $\left(X,{\cdot}_X,\1_X  \right)$, $\left(Y,{\cdot}_Y,\1_Y  \right)$ be groups and let $\phi: X\rightarrow Y$. Given $a\in X$, one defines the map $\phi_a:X\rightarrow Y$ by
	\[
	\phi_a(b)\doteq \phi(a)^{-1}\cdot_Y\phi(a\cdot_X b)\,,
	\]
	where $b\in X$.
\end{df}
\begin{rem}
	For non-abelian groups one can consider variants of this definition, where the order of terms is interchanged. Our choice of convention is motivated by the one used in QFT for the definition of the \emph{relative $S$-matrix} (see \cite{BDF} and a brief discussion at the end of Section~\ref{sec:conc:mod}) 
\end{rem}

\begin{df}[Generalized Hammerstein property]\label{def:genHamm} Consider a group  $\left(X,{\cdot}_X,\1_X  \right)$ with causality/locality specified by the relation ${\dashv}_X$/${\tang}_X$, respectively. Let $\left(Y,{\cdot}_Y,\1_Y  \right)$ be another group. A map $\phi:X\longrightarrow Y$ satisfies the \textit{generalized Hammerstein property}, if $\phi_a:X\longrightarrow Y$ (as given in Definition~\ref{def:relative}) is causality/locality multiplicative for all $a\in X$.
\end{df}
\begin{rem}
	For commutative  $\left(X,+,0,\tang_X  \right)$, the generalized Hammerstein property implies in particular that for $b\tang_X c$ we have the commutativity of the images:
	\[
	\phi_a(b+c)=\phi_a(b)\cdot_Y\phi_a(c)=\phi_a(c)\cdot_Y\phi_a(b)
	\]
	for all $a\in X$.
\end{rem}

\begin{rem}
For commutative  $\left(X,+,0,\dashv_X\right)$, Definition~\ref{def:genHamm} is equivalent to the condition that for $c\dashv_X b$ we have
	\[\phi(a+b+c)= \phi({ a+ b})\cdot_Y\phi(a)^{-1}\cdot_Y\phi({ a + c})\,,
	\]
which will be crucial in the definition of local S-matrices in section \ref{FS}. 
\end{rem}

We believe that the Hammerstein property in some sense singles out structures that we intuitively describe as \textit{local}. In this note we show that it features in the definition of:
\begin{itemize}
	\item Local functionals,
		\item Local Haag-Kastler nets,
	\item Local $S$-matrices,
	\item Local renormalization maps.
\end{itemize}
\section{Haag-Kastler axioms}
In the Haag-Kastler axiomatic approach to quantum field theory \cite{HK}, one describes a model by assigning algebras $\fA(\Ocal)$ (originally, $C^*$ or von Neumann algebras) to bounded regions $\Ocal$ in Minkowski spacetime $\M$. Hence this framework is often called algebraic quantum field theory (AQFT). More generally, in locally covariant QFT \cite{BFV,HW,HW01}, one can work with causally convex relatively compact subsets of a globally hyperbolic spacetime $M$.

This assignment of algebras to regions (the net of algebras) has to fulfill several physical requirements, among them:
\begin{enumerate}[{\bf HK1}]
	\item {\bf Isotony:} if $\Ocal_1\subset\Ocal_2$, then $\fA(\Ocal_1)\subset\fA(\Ocal_2)$.\label{HK:iso}
	\item {\bf Locality (Einstein causality):} if $\Ocal_1\bigtimes\Ocal_2$, then $[\fA(\Ocal_1),\fA(\Ocal_2)]_{\fA(\Ocal)}=\{0\}$, where $\Ocal$ is any  causally convex relatively compact  region containing both $\Ocal_1$ and $\Ocal_2$.\label{HK:loc}
	\item {\bf Time-slice axiom:}  if $\Ncal\subset \Ocal$ is a neighborhood of a Cauchy surface of $\Ocal$, then $\fA(\Ncal)=\fA(\Ocal)$.\label{HK:ts}
\end{enumerate}
In perturbative AQFT (see e.g. \cite{BDF,DF,BF0,Book}), algebras $\fA(\Ocal)$ are considered formal power series with coefficients in toplogical star algebras. In this approach, one can construct the local algebras $\fA(\Ocal)$ using the concept of \textit{formal $S$-matrices}.
\section{Formal $S$-matrices}\label{FS}
Following \cite{Vienna,BF19}, we review the construction of the net of algebras satisfying {\bf HK\ref{HK:iso}-HK\ref{HK:ts}}, using formal $S$-matrices.
\begin{df}[Generalized local $S$-matrix]\label{Smatrix}
	Let $(G,\dashv,+,0)$ be a 
	 group with causality and $\fA$ a unital topological *-algebra, with $U(\fA)\subset\fA$ denoting its group of unitary elements. A map $\Scal:G\rightarrow U(\fA)$ is a \textit{generalized local $S$-matrix} on $(G,\dashv,+,0)$  (or labeled by the elements of $G$) if it fulfills the following axioms:
	\begin{enumerate}[{\bf S1}]
		\item {\bf Identity preserving}:\label{S:start} $\Scal(0)=\1$.
		\item {\bf Locality}:\label{S:loc} $\Scal$ satisfies the Hammerstein property with non-commutative target (Def.~\ref{def:genHamm}), i.e. 
		$f_1\dashv f_2$ implies that
		\[
		\Scal(f_1+f+f_2)=	\Scal(f_2+f)\Scal(f)^{-1}	\Scal(f+f_1)\,,
		\]
		where $f_1, f, f_2\in G$.
	\end{enumerate}
Let $\euS_G$ denote the space of all generalized local $S$-matrices for the given group $G$ with causality.
\end{df} 
There are some further physically motivated axioms that one can impose, related to the dynamics. We will discuss this in Section~\ref{sec:local:S:matrices}, following the approach of \cite{BF19}.

\begin{rem}
Given a group with causality $(G,\dashv,+,0)$ it is easy to obtain a generalized local S-matrix by setting $\fA$ to be $\fA_G$, the group algebra over $\CC$ of the free group generated by elements $\Scal(f)$ (these are now formal generators), $f\in G$, modulo relations {\bf S\ref{S:loc}
} and {\bf S\ref{S:start}} (see e.g. \cite{BF19}).
 \end{rem}

Note that Definition~\ref{Smatrix} implies in particular that $\Scal$ is \textit{unit preserving} and \textit{causality multiplicative}, since for $f_2\dashv f_1$ we have
\[
\Scal(f_1+f_2)=	\Scal(f_1)\Scal(f_2)\,.
\]
By symmetrizing $\dashv$, we obtain from  $(G,\dashv,+,0)$ a group $(G,\tang,+,0)$ with locality relation $\tang$ defined by:
$f_1\tang f_2$ if both $f_1\dashv f_2$ and $f_2\dashv f_1$. It follows that a generalized local S-matrix $\Scal:(G,\tang,+,0)\rightarrow U(\fA)$ is \textit{locality multiplicative}, i.e. if $f_1\tang f_2$ then
	\[
\Scal(f_1+f_2)=	\Scal(f_1)\Scal(f_2)=\Scal(f_2)\Scal(f_1)\,,
\]
hence
\[
[\Scal(f_1),\Scal(f_2)]=0\,.
\]

Consider $(\Dcal(M),\preceq,+,0)$, the space of test functions $\Dcal(M)\doteq \Ci_c(M,\RR)$ with the additive group structure and the causality relation $\preceq$  of being \textit{not in the future of}. Clearly, if $\Scal$ is a formal $S$-matrix on $(\Dcal(M),\preceq,+,0)$ and we define $\fA(\Ocal)$ as the algebra generated by $\Scal(f)$, where $\supp(f)\subset\Ocal$, then the net $\Ocal\mapsto\fA(\Ocal)$ satisfies {\bf HK\ref{HK:loc}}.

This is how one can connect the abstract notion of locality expressed in terms of the Hammerstein property to the locality property {\bf HK\ref{HK:loc}} in terms of Haag-Kastler axioms.
\subsection{Renormalization group}
\subsubsection{Abstract definition}
\begin{df}[Renormalization group]\label{Rgroup}
	Let $(G,\dashv,+,0)$ be a group with causality. The renormalization group $\euR_G$ for $G$ is the group of maps $\Zcal:G\rightarrow G$, which are:
	\begin{enumerate}[{\bf Z1}]
		\item {\bf Identity preserving}: $\Zcal(0)=0$.\label{Z:ident}
		\item {\bf Causality preserving}:\label{Z:caus:pres} $\Zcal_{f}$ (as in Def.~\ref{def:relative}) satisfies the following:
		\[
		f_1\dashv f_2\ \textrm{implies}\ \Zcal_f(f_1) \dashv f_2\ \textrm{and}\ f_1\dashv\Zcal_f(f_2)\,.
		\]
		\item {\bf Local}:\label{Z:loc} $\Zcal$ 
		satisfies the Hammerstein property (Def.~\ref{def:genHamm}), i.e. $f_1\dashv f_2$ implies that
		\[
		\Zcal(f_1+f+f_2)=	\Zcal(f_1+f)-\Zcal(f)+	\Zcal(f_2+f)\,,
		\]
		where $f_1, f_2, f\in G$.
	\end{enumerate}
\end{df}
\begin{rem}
	Note that {\bf Z\ref{Z:caus:pres}} implies that $\Zcal$ is a causality map.
\end{rem}
Clearly, if $\Zcal\in\euR_G$, and $\Scal\in\euS_G$, then $\Scal\circ\Zcal$ is also an S-matrix, hence:
\begin{prop}
	$\euS_G$ is an $\euR_G$-module.
\end{prop}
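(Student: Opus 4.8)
The claimed module structure is the right action $\Scal\cdot\Zcal\doteq\Scal\circ\Zcal$ of the composition group $(\euR_G,\circ)$ on the set $\euS_G$, so the plan is: first check that $\Scal\circ\Zcal$ is again a generalized local $S$-matrix, then observe that the action axioms are automatic. The second part is immediate: composition of maps is associative, so $(\Scal\circ\Zcal_1)\circ\Zcal_2=\Scal\circ(\Zcal_1\circ\Zcal_2)$; the identity map $\id:G\to G$ satisfies {\bf Z\ref{Z:ident}}--{\bf Z\ref{Z:loc}} (indeed $\id_f=\id$) and hence is the neutral element of $\euR_G$, and $\Scal\circ\id=\Scal$; closure of $\euR_G$ under composition is part of Definition~\ref{Rgroup}. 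So the real content is well-definedness of the action, i.e. that $\Scal':=\Scal\circ\Zcal\in\euS_G$ whenever $\Scal\in\euS_G$ and $\Zcal\in\euR_G$.

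Axiom {\bf S\ref{S:start}} for $\Scal'$ is immediate from {\bf Z\ref{Z:ident}} and {\bf S\ref{S:start}}: $\Scal'(0)=\Scal(\Zcal(0))=\Scal(0)=\1$. For {\bf S\ref{S:loc}}, which by Definition~\ref{def:genHamm} says precisely that $\Scal'_f$ is causality multiplicative for every $f\in G$, the key step is to establish the factorization
\[
(\Scal\circ\Zcal)_f=\Scal_{\Zcal(f)}\circ\Zcal_f\qquad(\text{for all }f\in G),
\]
which follows by unwinding Definition~\ref{def:relative}: both sides send $h\in G$ to $\Scal(\Zcal(f))^{-1}\Scal(\Zcal(f+h))$. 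The two factors have exactly the properties one needs: $\Scal_{\Zcal(f)}$ is causality multiplicative because $\Scal$ has the generalized Hammerstein property {\bf S\ref{S:loc}} applied at the point $\Zcal(f)\in G$; $\Zcal_f:G\to G$ is causality multiplicative by {\bf Z\ref{Z:loc}}; and, crucially, $\Zcal_f$ is a \emph{causality map}, because from $f_1\dashv f_2$ the first clause of {\bf Z\ref{Z:caus:pres}} gives $\Zcal_f(f_1)\dashv f_2$, and then the second clause of {\bf Z\ref{Z:caus:pres}}, applied to the pair $(\Zcal_f(f_1),f_2)$, gives $\Zcal_f(f_1)\dashv\Zcal_f(f_2)$.

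The proof is then completed by the elementary observation that a composite $\chi\circ\psi$ is causality multiplicative whenever $\psi$ is a causality map that is causality multiplicative and $\chi$ is causality multiplicative: if $b\dashv a$ then $\psi(a+b)=\psi(a)+\psi(b)$ (causality multiplicativity of $\psi$) and $\psi(b)\dashv\psi(a)$ (causality map property of $\psi$), hence $\chi(\psi(a+b))=\chi(\psi(a)+\psi(b))=\chi(\psi(a))\,\chi(\psi(b))$ (causality multiplicativity of $\chi$), which is $(\chi\circ\psi)(a+b)=(\chi\circ\psi)(a)(\chi\circ\psi)(b)$. Applying this with $\psi=\Zcal_f$ and $\chi=\Scal_{\Zcal(f)}$ shows $\Scal'_f$ is causality multiplicative for all $f$, i.e. $\Scal'$ satisfies {\bf S\ref{S:loc}}, so $\Scal'\in\euS_G$. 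I do not expect a serious obstacle here; the one step to get right is the factorization identity, and the point worth highlighting is that axiom {\bf Z\ref{Z:caus:pres}} is exactly the hypothesis that makes $\Zcal_f$ a causality map --- precisely the ingredient the composition argument requires --- and it is also what forces one to keep the non-commutative orderings in $U(\fA)$ straight when unpacking {\bf S\ref{S:loc}}.
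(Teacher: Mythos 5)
Your proposal is correct and takes essentially the same route as the paper: the factorization $(\Scal\circ\Zcal)_f=\Scal_{\Zcal(f)}\circ\Zcal_f$ together with your composition lemma is just a tidier packaging of the paper's single displayed computation, which likewise uses {\bf Z\ref{Z:loc}} to split $\Zcal(f_1+f+f_2)$ into $\Zcal_f(f_1)+\Zcal(f)+\Zcal_f(f_2)$, {\bf Z\ref{Z:caus:pres}} (applied twice, exactly as you do) to get $\Zcal_f(f_1)\dashv\Zcal_f(f_2)$, and {\bf S\ref{S:loc}} at the shifted point $\Zcal(f)$. Your remark that the remaining axioms are immediate matches the paper's observation that the only non-trivial check is {\bf S\ref{S:loc}}.
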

\begin{proof}
The only non-trivial check is the Locality~\textbf{S\ref{S:loc}}. For $f_2\dashv f_1$ we have
\begin{multline*}
\widetilde{\Scal}(f_1+f+f_2)=\Scal\circ \Zcal (f_1+f+f_2)
=\Scal(\Zcal_f(f_1)+\Zcal_f(f_2)+\Zcal(f))\\
= 
\Scal(\Zcal_f(f_1)+\Zcal(f))\Scal(\Zcal(f))^{-1}\Scal(\Zcal_f(f_2)+\Zcal(f))
=\widetilde{\Scal}(f_1+f)\widetilde{\Scal}(f)^{-1}\widetilde{\Scal}(f_2+f)\,.
\end{multline*}
where in the second step we used \textbf{Z\ref{Z:loc}} while in the third step we used \textbf{Z\ref{Z:caus:pres}} together with \textbf{S\ref{S:loc}}.
\end{proof}

One can now ask the question whether for given $\Scal,\widetilde{\Scal}\in \euS_G$, there exists a $\Zcal\in \euR_G$ such that $\widetilde{\Scal}=\Scal\circ \Zcal$ (main theorem of renormalization). This is 
 more tricky to show and has been proven only under some additional assumptions, e.g. in the perturbative setting of \cite{BDF}. It would be interesting to investigate this problem for a finite dimensional Lie group $G$ with additional requirement that elements of $\euS_G$ and $\euR_G$ are analytic maps. In the next section we present a new variant of the perturbative setting of of \cite{BDF}, where we do not require that $G$ is a set of functionals on the space of fields. We then proceed to prove the main theorem of renormalization in this more general setting.

\subsubsection{Perturbative renormalization group}\label{prg}
We start with an algebra $\fA[[\lambda]]$ (where $\lambda$ is interpreted as the coupling constant) with a causality relation $\dashv$, such that the underlying vector space of $\fA$ equipped with $\dashv$ is a vector space with causality. Given the relation $\dashv$ on $\fA$, we extend it to $\fA[[\lambda]]$ by setting 
\[
\sum_{n=0}^{\infty} \lambda^n f_n\dashv \sum_{n=0}^{\infty} \lambda^n g_n\,,\quad \textrm{iff}\quad f_n\dashv g_m\ \forall n,m\in\NN_0.
\]
Let $\Vcal$ be a subspace of $\fA$ with the property:
\begin{enumerate}[{\bf L1}]
\item \label{L1} $\forall g\in \Vcal$, given $f_1,f_2\in \fA$ such that $f_1\dashv f_2$ and $N\in \NN$, $N>3$, there exist $g_1,\dots, g_N\in \Vcal$ (these do not need to be distinct) such that $g=\sum_{i=1}^{N} g_i$, the decomposition preserves $\dashv$ (i.e. if $g\dashv f$ then $g_i\dashv f$, for all $i$ and  if $h\dashv g$ then $h\dashv g_i$ for all $i$) and the following hold: 
\begin{align}
&g_i\dashv f_2\ \textrm{for}\ i=1,\dots,N-1\,,\label{rel1}\\ 
&f_1\dashv g_i\ \textrm{for}\ i=2,\dots,N\label{rel2}\\
&g_i\dashv g_j\ \textrm{for}\ 2\leq i+1<j\leq N\,.\label{rel3}
\end{align}
\end{enumerate}
\begin{rem}
	To see that elements fulfilling {\bf L\ref{L1}} indeed form a subgroup, note that if $g,h\in \Vcal$, then we can decompose $g=\sum_{i=1}^{N} g_i$ and decompose $h$ using all the relations \eqref{rel1}-\eqref{rel3} for $g_i$'s. By combining the terms appropriately, we obtain a decomposition of  $g+h$ satisfying  \eqref{rel1}-\eqref{rel3}. Observe that $(\{0\},+,0,\dashv)$ has {\bf L\ref{L1}}, so $\Vcal$ always exists.
\end{rem}
\begin{rem}
	Note that {\bf L\ref{L1}} implies that $\forall g\in \Vcal$, given $f_1,f_2\in \fA$ such that $f_1\dashv f_2$ and $N\in \NN$, $N>3$, there exist $g_1,\dots, g_N\in\fA$ such that $g=\sum_{i=1}^{N} g_i$ and for all subsets $I\subsetneq \{1,\dots,N\}$, there exist $I_1,I_2\subset \{1,\dots,N\}$ such that $I=I_1\dot\cup I_2$ and all $f_1,\{g_i,i\in I_1\}$ are $\dashv$ with respect to all $f_2,\{g_i,i\in I_2\}$. This is analogous to a property proven in \cite[Lemma 3.2]{BDF} for local functionals and we will use it in the proof of Lemma~\ref{key:lemma} later on. 
\end{rem}

In definition \ref{Rgroup}, we replace the group with the vector space $\lambda \Vcal[[\lambda]]$. We write the generalized local S-matrices $\Scal: \lambda \Vcal[[\lambda]]\rightarrow \fA[[\lambda]]$ as
\[
\Scal(\lambda f)=\1+\sum_{n=1}^{\infty} \frac{\lambda^n}{n!}\Tcal_n(f^{\otimes n})\,,
\]
where $f\in \Vcal$ and call the multilinear maps $\Tcal_n:\Vcal^{\otimes n}\rightarrow \fA$, \textit{the $n$-fold time-ordered products} (they extend to maps on $\lambda \Vcal[[\lambda]]$ in the obvious way). 

 We require in addition to the previous axioms for local S-matrices that 

\begin{enumerate}[{\bf S1}]
	\addtocounter{enumi}{2}
	\item \label{S:start2} $\TT_1=\id$.
\item {\bf Causality preserving}:\label{S:caus:pres} $\Scal$ satisfies the following:
\[
f_1\dashv f_2\ \textrm{implies}\ \Scal(f_1) \dashv f_2\ \textrm{and}\ f_1\dashv\Scal(f_2)\,.
\]
\end{enumerate}
From renormalization group elements, we require in addition that:
\begin{enumerate}[{\bf Z1}]
	\addtocounter{enumi}{3}
	\item \label{Z:order} $\Zcal=\id + \Ocal(\lambda)$.
\end{enumerate}
They are then given in terms of formal power series, so that
\[
\Zcal(\lambda f)=f+\sum_{n=2}^{\infty} \frac{\lambda^n}{n!}\Zcal_n(f^{\otimes n})\,,
\]
where $f\in \Vcal$.

The following theorem generalizes the main theorem of renormalization that has been proven in \cite{BDF} in the context of QFT:
\begin{thm}\label{thm:main}
	Let $\fA[[\lambda]]$ and $\lambda \Vcal[[\lambda]]$ be as specified above. Given two generalized S-matrices $\Scal, \widetilde{\Scal}\in\euS$ (in the sense of definition~\ref{Smatrix} for vector spaces instead of groups, with additional properties  {\bf S\ref{S:start2}} and {\bf S\ref{S:caus:pres}}), there exists an element of the renormalization group $\Zcal\in \euR$ (satisfying properties {\bf Z\ref{Z:ident}}-{\bf Z\ref{Z:order}}), such that
	\[
	\widetilde{\Scal}=\Scal\circ \Zcal\,.
	\]
	Also the converse holds.
\end{thm}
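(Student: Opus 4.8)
The plan is to construct $\Zcal$ order by order in $\lambda$ and verify that the object produced lies in $\euR$, i.e.\ satisfies \textbf{Z\ref{Z:ident}}--\textbf{Z\ref{Z:order}}; the converse direction is already essentially the content of the proposition showing $\euS$ is an $\euR$-module, so the substance is the forward direction. Writing $\Scal(\lambda f)=\1+\sum_{n\ge 1}\frac{\lambda^n}{n!}\Tcal_n(f^{\otimes n})$ and $\widetilde{\Scal}(\lambda f)=\1+\sum_{n\ge 1}\frac{\lambda^n}{n!}\widetilde{\Tcal}_n(f^{\otimes n})$ with $\Tcal_1=\widetilde{\Tcal}_1=\id$, and $\Zcal(\lambda f)=f+\sum_{n\ge 2}\frac{\lambda^n}{n!}\Zcal_n(f^{\otimes n})$, the equation $\widetilde{\Scal}=\Scal\circ\Zcal$ becomes, after expanding and comparing coefficients of $\lambda^n$, a hierarchy of identities of the form $\widetilde{\Tcal}_n(f^{\otimes n}) = \Tcal_n(f^{\otimes n}) + n!\,\Zcal_n(f^{\otimes n}) + (\text{terms involving }\Tcal_k,\ k<n,\text{ and }\Zcal_j,\ j<n)$. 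This is triangular: it \emph{defines} $\Zcal_n$ in terms of lower-order data that has already been fixed. So at the level of formal manipulation the map $\Zcal$ exists and is unique; the real work is to show that the $\Zcal_n$ so defined are \emph{well-defined multilinear maps on $\Vcal^{\otimes n}$} and that the resulting $\Zcal$ satisfies the causality-preservation \textbf{Z\ref{Z:caus:pres}} and the Hammerstein/locality axiom \textbf{Z\ref{Z:loc}}.

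The key step, and the main obstacle, is establishing locality of $\Zcal$. The strategy, following \cite{BDF}, is to prove a \emph{key lemma} (the Lemma~\ref{key:lemma} referenced in the excerpt): if $f = g_1 + \dots + g_N$ is a decomposition of the type guaranteed by property \textbf{L\ref{L1}} (so that the $g_i$ satisfy the chain of causality relations \eqref{rel1}--\eqref{rel3}), then $\Tcal_n$ and $\widetilde{\Tcal}_n$ ``factorize'' along this decomposition in a controlled way, and consequently $\Zcal_n(f^{\otimes n})$ depends only on the pieces in a manner compatible with the Hammerstein identity. Concretely one argues by induction on $n$: assuming $\Zcal_1,\dots,\Zcal_{n-1}$ already satisfy the appropriate partial-locality statements, one uses the causal factorization \textbf{S\ref{S:loc}} of both $\Scal$ and $\widetilde{\Scal}$ (together with \textbf{S\ref{S:caus:pres}} to keep the arguments of the $S$-matrices in the correct causal position) to split both sides of $\widetilde{\Scal}=\Scal\circ\Zcal$ evaluated on $f_1+f+f_2$ with $f_1\dashv f_2$, and read off that $\Zcal_n$ inherits the factorization. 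The remark in the excerpt deriving from \textbf{L\ref{L1}} the existence, for every proper subset $I\subsetneq\{1,\dots,N\}$, of a splitting $I=I_1\,\dot\cup\, I_2$ with the $f_1,\{g_i\}_{I_1}$ block $\dashv$ the $f_2,\{g_i\}_{I_2}$ block, is precisely what feeds the induction: it lets one rewrite any partial sum occurring at order $n$ as a causally ordered sum to which \textbf{S\ref{S:loc}} applies.

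Once the key lemma is in hand, the remaining verifications are comparatively routine. For \textbf{Z\ref{Z:ident}}: setting $f=0$ in the defining hierarchy gives $\Zcal(0)=0$ immediately. For \textbf{Z\ref{Z:order}}: the coefficient of $\lambda^1$ forces $\Zcal_1=\id$, so $\Zcal=\id+\Ocal(\lambda)$. For \textbf{Z\ref{Z:caus:pres}}: from $\Scal=\widetilde{\Scal}\circ\Zcal^{-1}$ and the causality-preservation of both $S$-matrices \textbf{S\ref{S:caus:pres}}, one extracts that if $f_1\dashv f_2$ then $\Zcal_f(f_1)\dashv f_2$ and $f_1\dashv\Zcal_f(f_2)$, using that $\dashv$ on $\fA[[\lambda]]$ is checked coefficient-wise and that $\Vcal$ is a vector space with causality. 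Finally, that $\Zcal$ is invertible within $\euR_G$ (so that it is a bona fide renormalization-group element and the correspondence $\Scal\leftrightarrow\widetilde{\Scal}$ is symmetric) follows because $\Zcal=\id+\Ocal(\lambda)$ is formally invertible and the inverse automatically satisfies \textbf{Z\ref{Z:ident}}--\textbf{Z\ref{Z:order}} by the same bookkeeping. The converse assertion — that $\Scal\circ\Zcal\in\euS$ for any $\Zcal\in\euR$ — is exactly the computation already carried out in the proof that $\euS_G$ is an $\euR_G$-module, now read in the perturbative vector-space setting, so nothing new is required there.
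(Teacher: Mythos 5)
Your overall strategy---the triangular, order-by-order construction of $\Zcal_{N+1}$ from the difference of the $(N{+}1)$-st coefficients of $\widetilde{\Scal}$ and of $\Scal\circ\Zcal^{N}$, with the converse delegated to the computation showing $\euS_G$ is an $\euR_G$-module---is the same as the paper's. But there is a genuine gap: you never verify that the maps $\Zcal_n$ take values in $\Vcal$ rather than merely in $\fA$. The coefficients $\widetilde{\Tcal}_{N+1}$ and those of $\Scal\circ\Zcal^{N}$ are $\fA$-valued, so the hierarchy a priori only defines $\Zcal_{N+1}:\Vcal^{\otimes (N+1)}\to\fA$. Without knowing that $\Zcal^{N+1}(f)\in\lambda\Vcal[[\lambda]]$, the composite $\Scal\circ\Zcal^{N+1}$ is not even defined (the domain of $\Scal$ is $\lambda\Vcal[[\lambda]]$), so the induction cannot proceed and $\Zcal$ is not an element of $\euR$. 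The paper closes exactly this hole with Lemma~\ref{key:lemma2}: once {\bf Z\ref{Z:caus:pres}} and {\bf Z\ref{Z:loc}} are known, the image $Z(g)$ inherits the decompositions required by {\bf L\ref{L1}} (built from the pieces $z_i=Z_{g_{i+1}}(g_i)$, $z_N=Z(g_N)$), so it lies in the admissible space. Your proposal needs this step or a substitute for it.

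A second, lesser issue is that your treatment of {\bf Z\ref{Z:caus:pres}} and {\bf Z\ref{Z:loc}} blurs where the paper is precise. What follows directly from {\bf S\ref{S:loc}} and {\bf S\ref{S:caus:pres}} by comparing coefficients (namely $\sum_{k=1}^{n-1}\tfrac{1}{k!(n-k)!}\widetilde{\Tcal}_n(f_1^{\otimes k}\otimes f_2^{\otimes n-k})=0$ for $f_1\dashv f_2$, and the corresponding causal relations for $\widetilde{\Tcal}_n(f_i^{\otimes n})$) is only the case $f=0$ of the two axioms. The passage to general $f$ is the content of Lemma~\ref{key:lemma}, whose proof is a formal-Taylor argument: replace $f$ by $g(\mu)=\sum_i\mu_i g_i$ with an {\bf L\ref{L1}} decomposition into more pieces than the order of the derivative under consideration, and for each multi-index use the splitting $I=I_1\dot\cup I_2$ to reduce every derivative at $\mu=0$ to the $f=0$ identities. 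You gesture at {\bf L\ref{L1}} and the splitting remark, but you attribute to the key lemma a different content (a factorization of the $\Tcal_n$ themselves), and your proposed derivation of {\bf Z\ref{Z:caus:pres}} for general $f$ ``from $\Scal=\widetilde{\Scal}\circ\Zcal^{-1}$'' does not work as stated: causality preservation of the two $S$-matrices only yields causal relations for the coefficients evaluated on $f_1^{\otimes n}$ and $f_2^{\otimes n}$, i.e.\ again the $f=0$ statement; the relative maps $\Zcal_f$ with $f\neq 0$ require the derivative/decomposition argument (or an equivalent), which your sketch does not supply.
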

The following lemmas will be useful in the proof of the theorem:
\begin{lemma}\label{key:lemma}
	Let $\fA[[\lambda]]$ and $\lambda \Vcal[[\lambda]]$ be as specified above. If a map $Z:\lambda \Vcal[[\lambda]]\rightarrow \fA[[\lambda]]$ satisfies {\bf Z\ref{Z:caus:pres}} and {\bf Z\ref{Z:loc}} for $f=0$, it satisfies  {\bf Z\ref{Z:caus:pres}} and {\bf Z\ref{Z:loc}} for general $f\in \lambda \Vcal[[\lambda]]$.
\end{lemma}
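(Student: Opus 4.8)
The statement is a "bootstrap" result: knowing the Hammerstein-type identity \textbf{Z\ref{Z:loc}} and the causality-preservation \textbf{Z\ref{Z:caus:pres}} only at the base point $f=0$ suffices to recover them for every $f\in\lambda\Vcal[[\lambda]]$. The plan is to fix $f_1\dashv f_2$ in $\fA[[\lambda]]$ and an arbitrary $f\in\lambda\Vcal[[\lambda]]$, and to feed into the $f=0$ hypotheses a sufficiently fine decomposition $f=\sum_{i=1}^N f^{(i)}$ of the type guaranteed by property \textbf{L\ref{L1}}. The point of \textbf{L\ref{L1}} (and of the reformulation in the second remark after it, modelled on \cite[Lemma 3.2]{BDF}) is exactly that $f$ can be split into pieces that, together with $f_1$ and $f_2$, can be sorted into two mutually $\dashv$-related clusters in every way one needs; so that, reading the whole argument inside a fixed order in $\lambda$, only finitely many pieces $f^{(i)}$ are relevant and all the combinatorially required $\dashv$-relations between $f_1$, $f_2$ and the $f^{(i)}$ hold simultaneously.

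\textbf{Key steps.} First I would record the $\lambda$-adic reduction: because $Z=\id+\Ocal(\lambda)$ and everything is a formal power series, it is enough to prove the two properties modulo $\lambda^{k+1}$ for each $k$, so we may work with a finite decomposition $f=\sum_{i=1}^{N}f^{(i)}$ with $N$ chosen (via \textbf{L\ref{L1}}, which permits any $N>3$) large enough for the combinatorics at order $k$. Second, I would establish \textbf{Z\ref{Z:caus:pres}} for general $f$: write $Z_f(f_1)=Z(f)^{-1}\cdot\, \text{``}Z(f_1+f)\text{''}$ — more precisely use $\phi_a$ from Definition~\ref{def:relative} — and expand $Z(f+f_1)=Z(\sum f^{(i)}+f_1)$ by iterating \textbf{Z\ref{Z:loc}} at $f=0$ along the decomposition, using at each stage the $\dashv$-relations \eqref{rel1}–\eqref{rel3} to split off one piece at a time; the hypothesis $f_1\dashv f_2$ and the decomposition properties then force $Z_f(f_1)\dashv f_2$, and symmetrically $f_1\dashv Z_f(f_2)$. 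Third, and this is the heart, I would prove \textbf{Z\ref{Z:loc}} for general $f$: expand both sides of
\[
Z(f_1+f+f_2)\overset{?}{=}Z(f_1+f)-Z(f)+Z(f_2+f)
\]
by repeatedly applying the $f=0$ instance of \textbf{Z\ref{Z:loc}} to peel off the pieces $f^{(i)}$ of $f$, each time checking — via \eqref{rel1}, \eqref{rel2}, \eqref{rel3} and the fact that the decomposition preserves $\dashv$ — that the required disjointness $\text{(left cluster)}\dashv\text{(right cluster)}$ holds at that stage, so that the base-point identity is legitimately applicable; then collect terms and verify the two expansions coincide. An induction on $N$ (or on the order $k$ in $\lambda$) organizes this cleanly: the inductive step removes $f^{(N)}$ using \eqref{rel1}, \eqref{rel2}, \eqref{rel3} and reduces to the same statement for $f-f^{(N)}=\sum_{i=1}^{N-1}f^{(i)}$, which still lies in $\Vcal$ and still has \textbf{L\ref{L1}}.

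\textbf{Main obstacle.} The genuine difficulty is bookkeeping the causality relations through the iterated application of the base-point identity: at each peeling step one must know that the accumulated "past cluster" (some $f_1$ together with a subset of the $f^{(i)}$) stands in the relation $\dashv$ to the accumulated "future cluster" (some $f_2$ together with the complementary $f^{(i)}$), and the non-symmetry of $\dashv$ means the two sides of each such check are not interchangeable. This is precisely what the reformulation of \textbf{L\ref{L1}} in the remark after it (the analogue of \cite[Lemma 3.2]{BDF}) is designed to supply — for every $I\subsetneq\{1,\dots,N\}$ a splitting $I=I_1\,\dot\cup\,I_2$ with $\{f_1\}\cup\{f^{(i)}:i\in I_1\}$ entirely $\dashv$-related to $\{f_2\}\cup\{f^{(i)}:i\in I_2\}$ — so the bulk of the work is to match each step of the expansion to one such subset $I$ and invoke the corresponding splitting, rather than to perform any analytic estimate. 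Once the finitely many required relations are in place at the given order in $\lambda$, the identity follows by the base-point hypothesis and a finite rearrangement.
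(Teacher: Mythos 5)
Your plan founders at its central step, the ``peeling'' induction. The decomposition supplied by \textbf{L\ref{L1}} deliberately does \emph{not} make all pieces mutually $\dashv$-related: \eqref{rel3} only gives $g_i\dashv g_j$ for \emph{non-adjacent} indices ($i+1<j$), so for instance $g_{N-1}$ and $g_N$ need not be related at all (think of overlapping slabs of a partition of unity in time). Consequently the splitting you need in order to remove $f^{(N)}$ — namely $\{f_1,f^{(1)},\dots,f^{(N-1)}\}$ entirely $\dashv$ to $\{f^{(N)},f_2\}$ — generally fails, and the same obstruction blocks splitting off $f^{(1)}$ on the other side. More fundamentally, the reformulation of \textbf{L\ref{L1}} that you invoke provides a two-cluster splitting only for \emph{proper} subsets $I\subsetneq\{1,\dots,N\}$; the full index set admits no such splitting, so no application (iterated or not) of the $f=0$ instance of \textbf{Z\ref{Z:loc}} can be made to the untruncated sum $f_1+\sum_i f^{(i)}+f_2$. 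Your ``$\lambda$-adic reduction'' does not repair this: you never explain how working modulo $\lambda^{k+1}$ restricts which of the $N$ pieces actually enter a given term, and without that the induction on $N$ has no legitimate first move.

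The missing idea — and the mechanism the paper actually uses, following Appendix B of \cite{BDF} — is a polarization/derivative argument rather than a peeling one. One replaces $f$ by $g(\mu)=\sum_{i=1}^N\mu_i g_i$ with auxiliary formal parameters $\mu_1,\dots,\mu_N$, observes that at each order the expressions depend polynomially on $\mu$ and are determined by the derivatives $\partial_\mu^\alpha|_{\mu=0}$ with $|\alpha|\leq n$, and chooses $N>n$. Then the index set $I=\{i:\alpha_i\neq0\}$ of any relevant derivative is automatically a \emph{proper} subset, so \textbf{L\ref{L1}} yields a splitting $I=I_1\,\dot\cup\,I_2$ with $\{f_1\}\cup\{g_i\}_{i\in I_1}$ $\dashv$ to $\{f_2\}\cup\{g_i\}_{i\in I_2}$; a \emph{single} application of the $f=0$ Hammerstein identity (disjoint additivity) to each of the four terms then makes everything cancel, and the analogous replacement of $g(\mu)$ by $g_{I_1}(\mu)+g_{I_2}(\mu)$ handles \textbf{Z\ref{Z:caus:pres}}. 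So the combinatorial control you correctly identified as the main obstacle is obtained not by checking $\dashv$-relations ``at each stage'' of an expansion (they are simply not available), but by first reducing, via the multi-parameter derivatives, to configurations indexed by proper subsets where the relations hold all at once.
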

\begin{proof}
	We use the property {\bf L\ref{L1}} and the fact that $Z(f+\mu g)$ is determined (in the sense of formal power series) by the derivatives at $\mu=0$. Following the argument used in Appendix B of \cite{BDF}, we prove that the $n$-th derivative with respect to $\mu$ of both sides of
	\[
	Z(f_1+\mu g+f_2)=Z(f_1+\mu g)-Z(\mu g)+Z(\mu g+f_2)
	\]
	vanishes for all $n>1$. Take $N>n$. It follows from {\bf L\ref{L1}} that $g$ can be written as
	\[
	g=\sum_{i=1}^{N}g_i
	\]
	such that for all subsets $I\subsetneq \{1,\dots,N\}$, there exist $I_1,I_2\subset \{1,\dots,N\}$ such that $I=I_1\dot\cup I_2$ and all $f_1,\{g_i,i\in I_1\}$ are $\dashv$ with respect to all $f_2,\{g_i,i\in I_2\}$. Let $\alpha=(\alpha_1,\dots,\alpha_N)$ with $|\alpha|\leq n$ and $\mu=(\mu_1,\dots,\mu_N)$. Consider
	\[\label{eq:derivative}
	\partial_\mu^\alpha \big(Z(f_1+g(\mu)+f_2)-Z(f_1+g(\mu))+Z( g(\mu))-Z(g(\mu)+f_2)\big)\big|_{\mu=0}\,,
	\]
	where $g(\mu)=\sum_{i=1}^{N}\mu_i g_i$.
	Let $I=\{i\in\{1,\dots,N\}|\alpha_i\neq 0\}$. Note that $I\subsetneq \{1,\dots,N\}$, so we can choose a decomposition $I=I_1\dot\cup I_2$ as described above. Let $g_{I_1}(\mu)\doteq \sum_{i\in I_1} \mu_i g_i$ and $g_{I_2}(\mu)\doteq \sum_{i\in I_2} \mu_i g_i$. The derivative in \eqref{eq:derivative} can be now written as
	\begin{multline*}
	\partial_\mu^\alpha \big(Z(f_1+g_{I_1}(\mu)+g_{I_2}(\mu)+f_2)
	-Z(f_1+g_{I_1}(\mu)+g_{I_2}(\mu))+Z(g(\mu))
	-Z(g_{I_1}(\mu)+g_{I_2}(\mu)+f_2)\big)\big|_{\mu=0}\\
	= \partial_\mu^\alpha \big(Z(f_1+g_{I_1}(\mu))+Z(g_{I_2}(\mu)+f_2)
	-Z(f_1+g_{I_1}(\mu))-Z(g_{I_2}(\mu))+Z(g_{I_1}(\mu))\\
	+Z(g_{I_2}(\mu))-Z(g_{I_1}(\mu))-Z(g_{I_2}(\mu)+f_2)\big)\big|_{\mu=0}=0\,.
	\end{multline*}
	where in the last step we used {\bf Z\ref{Z:loc}} for $f=0$.
	
	We follow a similar strategy to prove {\bf Z\ref{Z:caus:pres}}. With the notation as above, we want to show that for $f_1\dashv f_2$ and for all $n$, $|\alpha|\leq n$, we have
	\[
	\partial_\mu^\alpha (Z(f_1+g(\mu))-Z(g(\mu)))\dashv f_2\,,\qquad
	f_1\dashv \partial_\mu^\alpha (Z(f_2+g(\mu))-Z(g(\mu)))\,.
	\]
	As before, we can replace $g(\mu)$ with $g_{I_1}(\mu)+g_{I_2}(\mu)$ and use  {\bf Z\ref{Z:loc}} for $f=0$ to write the above condition as
	\[
	\partial_\mu^\alpha (Z(f_1+g_{I_1}(\mu))-Z(g_{I_1}(\mu)))\dashv f_2\,,\qquad	f_1\dashv \partial_\mu^\alpha (Z(f_2+g_{I_2}(\mu))-Z(g_{I_2}(\mu)))\,,
	\]
	and we see that it is indeed satisfied due to {\bf Z\ref{Z:caus:pres}} for $f=0$.
\end{proof}
\begin{lemma}\label{key:lemma2}
		Let $\fA[[\lambda]]$ and $\lambda \Vcal[[\lambda]]$ be as specified above. If a map $Z:\lambda \Vcal[[\lambda]]\rightarrow \fA[[\lambda]]$ satisfies {\bf Z\ref{Z:caus:pres}} and {\bf Z\ref{Z:loc}}, then $Z(f)\in\lambda \Vcal[[\lambda]]$ for all $f\in\lambda \Vcal[[\lambda]]$.
\end{lemma}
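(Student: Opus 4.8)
The plan is to prove the statement order-by-order in $\lambda$, using {\bf L\ref{L1}} to generate enough ``independent'' decompositions of an element of $\Vcal$ to force each Taylor coefficient of $Z(f)$ to lie in $\Vcal$. Write $f=\lambda g$ with $g\in\Vcal$ (the general element of $\lambda\Vcal[[\lambda]]$ is treated the same way, coefficient by coefficient in $\lambda$, so it suffices to handle $f=\lambda g$ and then argue formally). Since $Z=\id+\Ocal(\lambda)$ is not available here (we only assume {\bf Z\ref{Z:caus:pres}} and {\bf Z\ref{Z:loc}}), I would instead extract the relevant information directly from {\bf Z\ref{Z:loc}} applied to cleverly chosen splittings. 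The key observation is that by the remark following {\bf L\ref{L1}}, for any $N$ we may write $g=\sum_{i=1}^N g_i$ with $g_i\in\Vcal$ such that every proper subfamily splits into two mutually-$\dashv$ blocks; feeding such a decomposition into the Hammerstein identity and differentiating in the auxiliary parameters $\mu_i$ (as in the proof of Lemma~\ref{key:lemma}) expresses each multi-derivative $\partial_\mu^\alpha Z(g(\mu))|_{\mu=0}$ with $|\alpha|\geq 2$ in terms of strictly lower-order data plus terms of the form $\partial_\mu^\beta Z(\mu_j g_j)|_{\mu=0}$, i.e.\ single-variable derivatives of $Z$ along directions in $\Vcal$.

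Thus the whole statement reduces to the ``diagonal'' claim: for every $g\in\Vcal$ and every $n\geq 1$, the coefficient $\frac{d^n}{d\mu^n}Z(\mu g)|_{\mu=0}$ lies in $\Vcal$. Here I would use {\bf Z\ref{Z:caus:pres}} to pin down membership in $\Vcal$. Indeed, {\bf L\ref{L1}} guarantees that the decompositions it produces \emph{preserve} $\dashv$: if $h\dashv g$ then $h\dashv g_i$, and if $g\dashv f$ then $g_i\dashv f$. The plan is to show that $Z(\mu g)$, order by order, satisfies exactly the same $\dashv$-relations as $g$ itself, against \emph{every} element of $\fA$, and that $\Vcal$ can be characterised (using {\bf L\ref{L1}}) as the set of elements admitting the required decompositions; combined with {\bf Z\ref{Z:caus:pres}} (which says $Z_f$ inherits the causal placement of its argument) and the multilinearity/formal-power-series structure of $Z$, this forces each Taylor coefficient of $Z(\mu g)$ to be an element for which {\bf L\ref{L1}}-type decompositions exist, hence an element of $\Vcal$. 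Concretely, I would fix $f_1\dashv f_2$ in $\fA$, apply {\bf Z\ref{Z:caus:pres}} to get $Z_{f_1}(g)\dashv f_2$ and $f_1\dashv Z_{f_2}(g)$ at each order, and use that $f_1,f_2$ range over all $\dashv$-pairs to conclude the coefficients sit inside the subspace cut out by {\bf L\ref{L1}}; then bootstrap from $f=0$ to general $f$ exactly as in Lemma~\ref{key:lemma}.

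I expect the main obstacle to be the final identification step: showing that ``an element of $\fA$ that respects all the $\dashv$-relations that a given $g\in\Vcal$ respects, and is built multilinearly from pieces of $g$'' actually lies in $\Vcal$, rather than merely in some larger $\dashv$-saturated set. This is where the precise combinatorial content of {\bf L\ref{L1}} — and in particular the clause that the decomposition can be chosen with all $g_i\in\Vcal$ and \emph{simultaneously} compatible with a prescribed external pair $f_1\dashv f_2$ — has to be used in full strength, closing under the sums and iterated derivatives that appear. The earlier remark that elements satisfying {\bf L\ref{L1}} form a subgroup (equivalently, $\Vcal$ is closed under addition) is exactly what makes the induction on $n$ go through: once the lower-order coefficients are known to be in $\Vcal$, the Hammerstein-derived recursion expresses the $n$-th one as a finite sum of elements of $\Vcal$, hence it too lies in $\Vcal$.
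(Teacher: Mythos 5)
There is a genuine gap in your plan, and it is precisely the step you flag yourself as the ``main obstacle''. Your strategy is to show, order by order in $\lambda$ (and via the auxiliary $\mu$-derivatives), that the Taylor coefficients of $Z$ along directions in $\Vcal$ stand in the correct $\dashv$-relations to arbitrary pairs $f_1\dashv f_2$, and then to conclude from this that they lie in $\Vcal$ because ``$\Vcal$ can be characterised (using {\bf L\ref{L1}}) as the set of elements admitting the required decompositions''. But $\Vcal$ is not characterised this way: it is a \emph{given} subspace of $\fA$ postulated to have property {\bf L\ref{L1}}, and the whole content of the lemma is to exhibit, for $Z(g)$ and for every external pair $f_1\dashv f_2$ and every $N$, an actual decomposition $Z(g)=\sum_{i=1}^N z_i$ with the properties \eqref{rel1}--\eqref{rel3}. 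Knowing from {\bf Z\ref{Z:caus:pres}} that various coefficients of $Z(\mu g)$ are $\dashv$-compatible with $f_1$ or $f_2$ produces no such decomposition at all (an element can satisfy all the right $\dashv$-relations without being a sum of suitably placed pieces), so the induction you describe never closes; the reduction to the ``diagonal'' claim and the appeal to closedness of $\Vcal$ under addition do not repair this, because the pieces out of which you would want to assemble the coefficients are never constructed.

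The paper's proof is much more direct and avoids the order-by-order analysis entirely. Fix $f_1\dashv f_2$ and $N>3$, take the decomposition $g=\sum_{i=1}^N g_i$ guaranteed by {\bf L\ref{L1}}, and apply the Hammerstein identity {\bf Z\ref{Z:loc}} iteratively to obtain $Z(g)=\sum_{i=1}^N z_i$ with $z_i=Z_{g_{i+1}}(g_i)$ for $i=1,\dots,N-1$ and $z_N=Z(g_N)$. Property {\bf Z\ref{Z:caus:pres}} then transfers the relations \eqref{rel1}--\eqref{rel3} from the $g_i$ to the $z_i$ (and, since each $g_i$ can itself be decomposed compatibly with $\dashv$, each $z_i$ can be decomposed further). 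In other words, the decomposition of $Z(g)$ is obtained \emph{constructively} from the decomposition of $g$, using {\bf Z\ref{Z:loc}} to split and {\bf Z\ref{Z:caus:pres}} to place the pieces; no formal-power-series expansion, no $\mu$-derivatives, and no attempted characterisation of $\Vcal$ by $\dashv$-relations are needed. The derivative machinery you import belongs to the proof of Lemma~\ref{key:lemma}, where the issue is extending {\bf Z\ref{Z:caus:pres}} and {\bf Z\ref{Z:loc}} from $f=0$ to general $f$, not to the present lemma.
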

\begin{proof}
	We have to show that, given $f_1\dashv f_2$, $Z(g)$ can be decomposed as $Z(g)=\sum_{i=1}^N z_i$ for any $N>3$, so that the properties stated in {\bf L\ref{L1}} are satisfied. We do it by construction. Take a decomposition $g=\sum_{i=1}^N g_i$ that satisfies the properties required by {\bf L\ref{L1}}. Using {\bf Z\ref{Z:loc}} we can decompose
	\[
	Z(\sum_{i=1}^N g_i)=\sum_{i=1}^{N} z_i\,,
	\]
	where $z_i=Z_{g_{i+1}}(g_i)$ for $i=1,\dots, N-1$ and $z_N=Z(g_N)$. It follows now from {\bf Z\ref{Z:caus:pres}} that this decomposition satisfies the requirements \eqref{rel1}-\eqref{rel3}. Moreover, each of the terms $z_i$ can be further decomposed, since each $g_i$ can be decomposed in a way compatible with $\dashv$.
\end{proof}
\begin{proof}(\textit{of the theorem})
	We follow directly the inductive proof of \cite{BDF}.
	Given $\widetilde{\Scal}$ and $\Scal$, we want to construct $\Zcal$ as a formal power series:
	\[
	\Zcal(f)=f+\sum_{n=2}^{\infty} \frac{\lambda^n}{n!}\Zcal_n(f^{\otimes n})\,,
	\]
	Assume 	
	 that we have constructed $\Zcal$ as a formal power series up to order $N$, i.e. that we have a family of maps
	\[
	\Zcal_{k}:\Vcal^{\otimes k}\rightarrow \Vcal\,,\qquad k\leq N
	\] 
	such that 
	\[
	\widetilde{\Scal}(f)=\Scal\circ \Zcal^N(f)+\Ocal(\lambda^{N+1})\,, 
	\]
	where
	$$\Zcal^N(f)=f+\sum_{k=2}^{N}\frac{\lambda^k}{k!}\Zcal_{k}(f^{\otimes k})$$
	 and $\Zcal^N\in \euR$. Clearly, $\widetilde{\Scal}^N\doteq \Scal\circ \Zcal^N\in \euS$. We can now define
	\be\label{def:ZNplus1}
	\Zcal_{N+1}\doteq \widetilde{\Tcal}_{N+1}-\widetilde{\Tcal}^N_{N+1}\,,
	\ee
	where $\widetilde{\Tcal}_{N+1}$ and $\widetilde{\Tcal}^N_{N+1}$ are coefficients in expansion of $\widetilde{\Scal}$ and $\widetilde{\Scal}^N$ respectively. We need to check that $\Zcal^{N+1}$ defined by
	\[
	\Zcal^{N+1}(f)=\Zcal^{N}(f)+\frac{\lambda^{N+1}}{(N+1)!}\Zcal_{N+1}(f^{\otimes N+1})
	\]
	is an element of $\euR$. The only non-trivial properties to check are {\bf Z\ref{Z:caus:pres}} and {\bf Z\ref{Z:loc}}. First we show that {\bf S\ref{S:loc}} and {\bf S\ref{S:caus:pres}} imply that  {\bf Z\ref{Z:caus:pres}} and {\bf Z\ref{Z:loc}} hold for $f=0$. Note that
	\[
	\widetilde{\Scal}(\lambda(f_1+f_2))=\1+\sum_{n=1}^{\infty}\sum_{k=0}^n \frac{\lambda^n}{k!(n-k)!}\widetilde{\Tcal}_n(f_1^{\otimes k}\otimes f_2^{\otimes n-k})\,,
	\]
	and from {\bf S\ref{S:caus:pres}} follows that if $\supp f_1\dashv \supp f_2$, then 
	\[
	\sum_{k=0}^n \frac{1}{k!(n-k)!}\widetilde{\Tcal}_n(f_1^{\otimes k}\otimes f_2^{\otimes n-k})=\widetilde{\Tcal}_n(f_1^{\otimes n})+\widetilde{\Tcal}_n(f_2^{\otimes n})
	\]
	for all $n\in \NN$, so
	\be\label{Tntilde:caus}
	\sum_{k=1}^{n-1} \frac{1}{k!(n-k)!}\widetilde{\Tcal}_n(f_1^{\otimes k}\otimes f_2^{\otimes n-k})=0\,.
	\ee
	Similarly
	\be\label{TntildeN:caus}
	\sum_{k=1}^{n-1} \frac{1}{k!(n-k)!}\widetilde{\Tcal}^N_n(f_1^{\otimes k}\otimes f_2^{\otimes n-k})=0\,.
	\ee
	Assuming that $\Zcal^N$ satisfies {\bf Z\ref{Z:loc}} with $f=0$, for $\Zcal^{N+1}$ to satisfy {\bf Z\ref{Z:loc}} with $f=0$, we need to prove that
	\[
\Zcal_{N+1}((f_1+f_2)^{\otimes N+1})\\=\Zcal_{N+1}(f_1^{\otimes N+1})+\Zcal_{N+1}(f_2^{\otimes N+1})\,,
	\]
	i.e.
	\[
	\sum_{k=1}^{N}\frac{1}{k!(N+1-k)!}\Zcal_{N+1}(f_1^{\otimes k}\otimes f_2^{\otimes N+1-k})=0
	\]
	This is satisfied, because of
	\eqref{def:ZNplus1} together with \eqref{Tntilde:caus} and \eqref{TntildeN:caus}. 
	
	Now we prove {\bf Z\ref{Z:loc}} for $f=0$. Assume that $f_1\dashv f_2$. We need to show that 
	\[
	\Zcal^{N+1}(f_1)\dashv f_2\,,\quad f_1\dashv \Zcal^{N+1}(f_2)\,.
	\]
	From the definition of $\dashv$ on formal power series and the inductive step assumption, we know that $\Zcal_{n}(f_1^{\otimes n})\dashv f_2$ and $f_1\dashv\Zcal_{n}(f_2^{\otimes n})$
	for all $1\leq n\leq N$. Using \eqref{def:ZNplus1}, all we need to show is that
	\be\label{eq:causpres}
\Zcal_{N+1}(f_1^{\otimes N+1})\dashv f_2\,,\	f_1\dashv \Zcal_{N+1}(f_2^{\otimes N+1})\,,
	\ee
	We use the fact that $\widetilde{\Scal}$ and $\widetilde{\Scal}^N$ are both causality preserving ({\bf S\ref{S:caus:pres}}) so
	\begin{align*}
\widetilde{\Tcal}_{N+1}(f_1^{\otimes N+1})\dashv f_2&\,,\ 
f_1\dashv \widetilde{\Tcal}_{N+1}(f_2^{\otimes N+1})\,,\\ \widetilde{\Tcal}^N_{N+1}(f_1^{\otimes N+1})\dashv f_2&\,,\ 
f_1\dashv \widetilde{\Tcal}^N_{N+1}(f_2^{\otimes N+1})\,,
	\end{align*}
and since $\dashv$ is compatible with addition, the equation \eqref{eq:causpres} follows.

We have thus proven that $Z^{N+1}$ satisfies {\bf Z\ref{Z:caus:pres}} and {\bf Z\ref{Z:loc}} for $f=0$, so from Lemma~\ref{key:lemma} follows that {\bf Z\ref{Z:caus:pres}} and {\bf Z\ref{Z:loc}} hold also for general $f\in \lambda \Vcal[[\lambda]]$.

Finally, we invoke Lemma~\ref{key:lemma2} to conclude that $Z^{N+1}$ preserves $\lambda \Vcal[[\lambda]]$, so is an element of the renormalization group. 
\end{proof}

\section{Local functionals and renormalization}\label{sec:LocFunRen}
\subsection{Smooth local functionals}\label{sec:smooth:func}

As stated in the introduction, local functionals can be characterized by requiring a regularity condition together with the Hammerstein property. For concreteness, we fix a globally hyperbolic spacetime $M$ and we will focus on the example of the real scalar field, i.e. we start with the classical field configuration space $\Ecal\equiv\Ci(M,\RR)$. For future reference, denote $\Dcal\equiv\Ci_c(M,\RR)$, the space of smooth functions on M with compact support.  Classical observables are smooth functionals of $\Ecal$, i.e. elements of $\Ci(\Ecal,\RR)$ (We want the observables to be real-valued, since in the quantum theory the involution is defined as the complex conjugation and we want the observables to be ``self-adjoint'' in the sense that $F^*=F$). Smoothness is understood in the sense of Bastiani \cite{Bas64,Ham,Mil,Neeb06}:
\begin{df}
	Let $\Xcal$ and $\Ycal$ be topological vector spaces, $U \subseteq \Xcal$ an open set and $F:U \rightarrow \Ycal$ a map. The derivative of $f$ at $x\in U$ in the direction\index{derivative!on a locally convex vector space} of $h\in\Xcal$ is defined as
	\be\label{de}
	\left<F^{(1)}(x),h\right> \doteq \lim_{t\rightarrow 0}\frac{1}{t}\left(F(x + th) - F(x)\right)
	\ee
	whenever the limit exists. The function $f$ is called differentiable\index{infinite dimensional!calculus} at $x$ if $\left<F^{(1)}(x),h\right>$ exists for all $h \in \Xcal$. It is called continuously differentiable if it is differentiable at all points of $U$ and
	$F^{(1)}:U\times \Xcal\rightarrow \Ycal, (x,h)\mapsto F^{(1)}(x)(h)$
	is a continuous map. It is called a $\Ccal^1$-map if it is continuous and continuously differentiable. Higher derivatives are defined by
	\[
	\left<F^{(k)}(x),v_1\otimes\dots\otimes v_k\right>\doteq\left. \frac{\partial^k}
	{\partial t_1\dots\partial t_k}F(x+t_1 v_1+\dots + t_k v_k)\right|_{t_1=\dots=t_k=0},
	\]
	and $f$ is $\Ccal^k$ if $f^{(k)}$ is jointly  continuous as a map $U\times \Xcal^k\rightarrow \Ycal$. We say that $f$ is smooth if it is $\Ccal^k$ for all $k\in\NN$. 
\end{df}

The following definition of locality has been proposed in \cite{BDF} and refined in \cite{BDGR}. 

\begin{df}\label{def:LocalFunctional}
	Let $U\subset \Ecal$. A Bastiani smooth functional $F:U\rightarrow \RR$ is \emph{local} if:
	\begin{enumerate}[{\bf LF1}]
		\item $F$ satisfies the (generalized) Hammerstein property (definition~\ref{def:genHamm}) as a map $F:(\Ecal,\tang,+,0)\rightarrow (\CC,+,0)$, where $\tang$ is the disjointness of supports.
			\item For every $\varphi\in U$, the differential 
		$F^{(1)}(\varphi)$ of $F$ at $\varphi$ has an
		empty  wave front set and the map 
		$\varphi\mapsto F^{(1)}(\varphi)$ is Bastiani smooth
		from $U$ to $\Dcal$.\label{technical}
	\end{enumerate}
\end{df}

An alternative definition has been given in \cite{BFR}, where the condition {\bf LF\ref{technical}} is replaced by a weaker condition:
	\begin{enumerate}[{\bf LF1'}]
	\addtocounter{enumi}{1}
	\item Assume that that $F^{(1)}$ is locally bornological (see below) into $\Gamma_c(\wedge^dT^*M\rightarrow M)$, where $d$ is the dimension of $M$.\label{technical2}
\end{enumerate}
\begin{df}
	Let $\mathcal{V}_1$, $\mathcal{V}_2$ be two locally convex vector spaces and $U\subset \mathcal{V}_1$ a non-empty open subset. A map $T:U\rightarrow \mathcal{V}_2$ is said to be locally bornological (into $\mathcal{V}_2$) if for all $x\in U$ there is an open neighborhood $V\subset U$, $x\in V$  such that $T\big|_V$ maps bounded subsets of $V$ into bounded subsets of $\mathcal{V}_2$.
\end{df}
The following results relate the above formulations to the more ``traditional'' notion of locality used in physics.
\begin{thm}[VI.3 in \cite{BDGR}]\label{TheoPrincipal}
	Let $U\subset\Ecal$ and 
	$F:U\to \RR$ be Bastiani smooth. Then, $F$ is local in the sense of definition~\ref{def:LocalFunctional} if and only if
for every $\varphi\in U$, there is a neighborhood 
$V$ of $\varphi$, an integer $k$,
an open subset $\mathcal{V}\subset J^kM$ and a smooth
function $f\in C^\infty(\mathcal{V})$ such that
$x\in M\mapsto f(j^k_x\psi)$ is supported in a compact subset $K\subset M$ and
\begin{align*}
	F(\varphi+\psi) &= F(\varphi)+\int_M f(j^k_x\psi) dx,
\end{align*}
whenever $\varphi+\psi\in V$
and where $j^k_x\psi$ denotes the $k$-jet of $\psi$ at $x$.

Here $J^kM$ denotes the bundle of $k-th$ jets and $j_x^k(\ph)$ is the $k-th$ jet of $\ph$ as a point $x\in M$.
\end{thm}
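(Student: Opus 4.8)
The plan is to prove the two implications separately; the direction ``integral representation $\Rightarrow$ locality'' is elementary, while the converse rests on an analysis of the first functional derivative $F^{(1)}$ together with a version of Peetre's theorem. For the easy direction, suppose that near a given $\varphi$ one has $F(\varphi+\psi)=F(\varphi)+\int_M f(j^k_x\psi)\,dx$. Differentiating under the integral sign exhibits $\psi\mapsto F^{(1)}(\varphi+\psi)$, locally in $\psi$, as a differential operator acting on test functions; hence $F^{(1)}(\varphi+\psi)$ is a compactly supported smooth density, has empty wave front set, and depends smoothly on $\psi$, which is condition {\bf LF\ref{technical}}. The same local formula shows that $F^{(1)}$ is germ-local (its value near a point depends only on the germ of the argument there), so $F^{(2)}(\chi)$ is supported on the diagonal of $M\times M$ for every $\chi$; therefore, for $\phi_1,\phi_3$ with $\supp\phi_1\cap\supp\phi_3=\varnothing$ and any $\phi_2$, $\partial_s\partial_t\,F(\phi_2+s\phi_1+t\phi_3)=\langle F^{(2)}(\phi_2+s\phi_1+t\phi_3),\phi_1\otimes\phi_3\rangle=0$, and integrating over $(s,t)\in[0,1]^2$ yields the Hammerstein identity \eqref{add}, i.e.\ {\bf LF1}. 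This proves that the integral representation implies locality in the sense of Definition~\ref{def:LocalFunctional}.

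For the converse, assume $F$ is local. The first step is to promote {\bf LF1} to germ-locality of $F^{(1)}$: if $\varphi_1=\varphi_2$ on an open set $\mathcal O$ and $h$ is supported in $\mathcal O$, then $\supp(\varphi_1-\varphi_2)\cap\supp h=\varnothing$, so \eqref{add} applied to $\varphi_1-\varphi_2,\varphi_2,th$ gives $F(\varphi_1+th)=F(\varphi_1)+F(\varphi_2+th)-F(\varphi_2)$, and differentiating at $t=0$ yields $\langle F^{(1)}(\varphi_1),h\rangle=\langle F^{(1)}(\varphi_2),h\rangle$; thus $F^{(1)}(\varphi_1)=F^{(1)}(\varphi_2)$ on $\mathcal O$. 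By {\bf LF\ref{technical}} the map $\varphi\mapsto F^{(1)}(\varphi)$ is moreover Bastiani smooth with values in the compactly supported smooth densities, so a smooth (Bastiani) version of Peetre's theorem applies: for every $\varphi_0\in U$ there are a convex neighbourhood $V\ni\varphi_0$, an integer $k$, an open $\mathcal V\subset J^kM$ with $\{j^k_x\varphi : x\in M\}\subset\mathcal V$ for all $\varphi\in V$, and a smooth $f_k\in C^\infty(\mathcal V)$ with $F^{(1)}(\varphi)(x)\,dx=f_k(j^k_x\varphi)\,dx$ for $\varphi\in V$. (Peetre's theorem produces a locally finite order; this is bounded on the compact support of $F^{(1)}(\varphi_0)$, and $F^{(1)}$ vanishes near infinity, so after shrinking $V$ one fixed $k$ suffices.)

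The second step reconstructs $F$. For $\varphi+\psi\in V$ convexity gives $\varphi+t\psi\in V$ for $t\in[0,1]$, so the fundamental theorem of calculus for Bastiani-smooth maps gives
\[
F(\varphi+\psi)-F(\varphi)=\int_0^1\langle F^{(1)}(\varphi+t\psi),\psi\rangle\,dt=\int_M\Big(\int_0^1 f_k\big(j^k_x\varphi+t\,j^k_x\psi\big)\,dt\Big)\psi(x)\,dx\,.
\]
Setting $f(j^k_x\psi)\doteq\big(\int_0^1 f_k(j^k_x\varphi+t\,j^k_x\psi)\,dt\big)\psi(x)$ defines a smooth function on a suitable open subset of $J^kM$ — the fixed $\varphi$ enters only through its $k$-jet at the base point of the argument, and $\psi(x)$ is the zeroth-order component of $j^k_x\psi$ — and since this expression vanishes off $\supp\psi$, the map $x\mapsto f(j^k_x\psi)$ has compact support. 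This is precisely the representation claimed, with $\varphi$ (here $\varphi_0$) as the base point.

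The main obstacle is the Peetre step: extracting from germ-locality together with Bastiani smoothness of $F^{(1)}$ both a \emph{finite} order $k$ and a genuinely smooth dependence of $F^{(1)}(\varphi)(x)$ on the jet $j^k_x\varphi$, uniformly for $\varphi$ in a neighbourhood of $\varphi_0$; once this is in hand, the reconstruction above and the converse implication are routine bookkeeping with supports and differentiation under the integral sign.
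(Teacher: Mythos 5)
Your skeleton reproduces the strategy of the cited proof in \cite{BDGR} (easy direction by differentiating the integral representation; converse via germ-locality of $F^{(1)}$, a Peetre-type step, and reconstruction by the fundamental theorem of calculus on a convex neighbourhood), but the step you flag as ``the main obstacle'' is not a side issue — it \emph{is} the theorem. Invoking ``a smooth (Bastiani) version of Peetre's theorem'' that delivers, uniformly for $\varphi$ in a $C^\infty$-neighbourhood of $\varphi_0$, both a single finite order $k$ and a \emph{smooth} function $f_k$ on an open subset of $J^kM$ with $F^{(1)}(\varphi)(x)=f_k(j^k_x\varphi)$, is assuming essentially what has to be proven: the classical (Slov\'ak-type) nonlinear Peetre theorem gives only a locally finite order for a local operator and says nothing about smooth dependence on jet coordinates, and your parenthetical fix (``the order is bounded on the compact support of $F^{(1)}(\varphi_0)$, and $F^{(1)}$ vanishes near infinity'') is not an argument — the order must be controlled jointly in $x$ \emph{and} in $\varphi$ ranging over a neighbourhood whose seminorms constrain $\varphi-\varphi_0$ only on compact sets. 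Establishing exactly this (a Peetre-type theorem adapted to Bastiani-smooth maps, plus smoothness of the induced function on $J^kM$, obtained e.g.\ by evaluating on polynomial representatives of jets) is the bulk of the proof in \cite{BDGR}; everything else in your write-up is, as you say, bookkeeping, so the proposal as it stands has the central gap.

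There is also a genuine error in the reconstruction step: you claim $x\mapsto f(j^k_x\psi)$ is compactly supported ``since this expression vanishes off $\supp\psi$''. But $\psi$ ranges over $V-\varphi$ with $V$ open in $\Ecal=C^\infty(M)$, so $\psi$ need \emph{not} be compactly supported, and $\supp\psi$ gives you nothing; moreover the theorem asks for a fixed compact $K$. The compactness has to come from {\bf LF\ref{technical}}: each $F^{(1)}(\varphi)$ lies in $\Dcal$, and continuity of $\varphi\mapsto F^{(1)}(\varphi)$ from the metrizable space $\Ecal$ into the LF-space $\Dcal$ forces $\supp F^{(1)}(\varphi)\subset K$ for a single compact $K$ and all $\varphi$ in a (possibly smaller) neighbourhood of $\varphi_0$ (otherwise one builds a sequence $\varphi_n\to\varphi_0$ whose images converge in $\Dcal$ without lying in any common $\Dcal_K$). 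This uniform $K$ is also what legitimises the interchange of the $t$- and $x$-integrations in your fundamental-theorem-of-calculus step. Finally, two smaller points: in the easy direction your derivation of \eqref{add} by integrating $\langle F^{(2)},\phi_1\otimes\phi_3\rangle$ over $[0,1]^2$ needs the whole square of configurations to stay inside $U$ (a direct pointwise verification from the representation, using that each $x$ lies outside $\supp\phi_1$ or outside $\supp\phi_3$, avoids this), and you should note that the diagonal support of $F^{(2)}$ and the compact support of the derivatives $\partial f/\partial u_\alpha(j^k_x\psi)$ off $K$ require the (easy) remark that the representation holds on an open set of $\psi$'s.
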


A generalization of this theorem has been proven in \cite{BFR}.
\begin{thm}[Prop.~2.3.13 in \cite{BFR}]\label{TheoPrincipal2}
	Let $U\subset\Ecal$ and 
	$F:U\to \RR$ be Bastiani smooth. In theorem \ref{TheoPrincipal}, {\bf LF\ref{technical}} can be replaced by a weaker condition {\bf LF\ref{technical2}'}.
\end{thm}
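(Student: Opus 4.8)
The statement is an equivalence, so the plan is to prove both directions and conclude that a Bastiani smooth $F$ satisfies {\bf LF1} together with {\bf LF\ref{technical2}'} if and only if it admits the local integral representation of Theorem~\ref{TheoPrincipal}. The forward direction is essentially free: by Theorem~\ref{TheoPrincipal} the integral representation already forces {\bf LF1} and the stronger condition {\bf LF\ref{technical}}, and {\bf LF\ref{technical}} $\Rightarrow$ {\bf LF\ref{technical2}'}. Indeed, an empty wave front set means $F^{(1)}(\ph)$ is represented by a smooth compactly supported density, i.e.\ an element of $\Gamma_c(\wedge^d T^*M\to M)$, while Bastiani smoothness of $\ph\mapsto F^{(1)}(\ph)$ into $\Dcal$ entails continuity and hence local boundedness, which is exactly the locally bornological property. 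So only the reverse implication, {\bf LF1}+{\bf LF\ref{technical2}'} $\Rightarrow$ local form, carries content.

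For the reverse direction I would re-run the proof of Theorem~\ref{TheoPrincipal}, isolating the two places where the stronger {\bf LF\ref{technical}} was invoked and checking that {\bf LF\ref{technical2}'} suffices at each. The first input from locality is purely combinatorial and uses only {\bf LF1}: differentiating the Hammerstein identity of Definition~\ref{def:genHamm} with respect to $\ph_1$ and $\ph_3$ (holding $\ph_2=\ph$ fixed) in directions of disjoint support shows that the second variation $F^{(2)}(\ph)$ vanishes off the diagonal, hence is supported on $\Delta\subset M\times M$ for every $\ph\in U$; similarly each $F^{(n)}(\ph)$ lives on the thin diagonal. This step is unchanged by the weakening.

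The heart of the matter, and the main obstacle, is the regularity step. Under {\bf LF\ref{technical}} one knew outright that $F^{(1)}(\ph)$ is a smooth density; here one must instead extract this from the locally bornological hypothesis {\bf LF\ref{technical2}'} on $F^{(1)}$ into the LF-space $\Gamma_c(\wedge^d T^*M\to M)$. The plan is to use the local bound furnished by {\bf LF\ref{technical2}'} on a neighborhood $V$ of a fixed $\ph$ to control the transverse order of the diagonal distribution $F^{(2)}$ uniformly over $V$, so that the Schwartz structure theorem for distributions supported on a submanifold applies with a locally uniform finite order $k$: locally $F^{(2)}(\ph)=\sum_{|\alpha|\le k} c_\alpha(\ph;\,\cdot)\,\partial^\alpha_y\delta_\Delta$. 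The bornological bound is precisely what upgrades the coefficients to genuine smooth objects and prevents the order from growing as $\ph$ varies. This functional-analytic upgrade, replacing the wave-front/smoothness-into-$\Dcal$ machinery of \cite{BDGR} by a bornological argument on top-degree forms, is the improvement of \cite{BFR}, and it is where essentially all the technical care concentrates.

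With the $k$-jet dependence of $F^{(1)}(\ph)$ in hand, I would reconstruct the representation by the fundamental theorem of calculus for Bastiani smooth maps along the segment from $\ph$ to $\ph+\psi$:
\[
F(\ph+\psi)-F(\ph)=\int_0^1\!\big\langle F^{(1)}(\ph+t\psi),\psi\big\rangle\,dt=\int_M f(j^k_x\psi)\,dx,
\]
valid whenever $\ph+\psi\in V$, with $f$ a smooth function on an open subset of $J^kM$ assembled from the coefficients $c_\alpha$, the local boundedness from {\bf LF\ref{technical2}'} guaranteeing smoothness of $f$ and the compact-support statement. Note finally that combining the two directions gives {\bf LF1}+{\bf LF\ref{technical2}'} $\Rightarrow$ {\bf LF1}+{\bf LF\ref{technical}}, so the two regularity conditions are in fact equivalent in the presence of the Hammerstein property, even though {\bf LF\ref{technical2}'} is formally weaker.
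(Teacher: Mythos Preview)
The paper does not contain a proof of this theorem. It is stated with the attribution ``Prop.~2.3.13 in \cite{BFR}'' and no argument is given; the result is simply quoted from the cited reference. There is therefore nothing in the present paper to compare your proposal against.

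As a brief comment on the proposal itself: your identification of the structure is sound---the forward implication is immediate since Bastiani smoothness into $\Dcal$ implies local bornology, and the reverse implication requires rerunning the argument of \cite{BDGR} with the weaker regularity input. One small confusion: you write that under {\bf LF\ref{technical2}'} one must ``extract'' that $F^{(1)}(\ph)$ is a smooth density, but in fact {\bf LF\ref{technical2}'} already assumes $F^{(1)}$ takes values in $\Gamma_c(\wedge^d T^*M\to M)$, so smoothness of the first derivative is hypothesized, not derived. The genuine work, as you correctly locate, is in showing that local bornology (rather than Bastiani smoothness) of $\ph\mapsto F^{(1)}(\ph)$ suffices to get a locally uniform bound on the jet order $k$. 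Your sketch gestures at this but does not carry it out; for the actual argument you would need to consult \cite{BFR} directly.
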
	

Clearly, the space of local functionals $\Fcal_{\loc}$ forms a group with addition. In order to equip it with causality or locality structure, we will also need a notion of \textit{spacetime support}: 
\begin{df}
For $F\in\Ci(\Ecal,\RR)$, its support is defined by
\be\label{support}
\supp F\doteq\{ x\in M\mid\forall\ \text{open }U\ni x\ \exists \ph,\psi\in\Ecal,\,
 \supp\psi\subset U,F(\ph+\psi)\not= F(\ph)\}\,.\nonumber
\ee
\end{df}
\subsection{Local $S$-matrices in the functional approach}\label{sec:local:S:matrices}
Using local functionals one can construct a concrete realization of generalized local S-matrices from definition~\ref{Smatrix} that is relevant for building QFT models. In this section we summarize how both classical and quantum dynamics can be nicely described using maps satisfying the Hammerstein property. 
\subsubsection{Classical dynamics}
We introduce generalized Lagrangians, following the approach of \cite{BDF}. 
\begin{df}
	A generalized Lagrangian is a smooth map $L$ from $(\Dcal,\tang,+,0)$ to $(\Fcal_\loc,\tang,+,0)$ (here $\tang$ is disjointness of supports), which is support preserving 
	\[
	\supp(L(f))\subset \supp f\,.
	\]
	(in particular it is a \textit{locality map})
	and satisfies the Hammerstein property, i.e.
	\[
	L(f_1+f+f_2)=L(f_1+f)-L(f)+L(f_2+f)\,,
	\]
	if $f_1\tang f_2$. Let $\euL$ denote the space of generalized Lagrangians.
\end{df}
Following \cite{BF19}, we introduce some notation. 
\begin{df}\label{df:delta:L}
	Let $L\in \euL$, $\ph\in\Ecal$. We define a functional $\delta L(\ph):\Dcal\times\Ecal\rightarrow \RR$ by
	\[
	\delta L(\psi)[\ph]\doteq L(f)[\ph+\psi]-L(f)[\ph]\,, 
	\]
	where $\ph\in\Ecal$, $\psi\in \Dcal$ and $f\equiv 1$ on $\supp \psi$ (the map $\delta L(\psi)[\ph]$ thus defined does not depend on the particular choice of $f$).
\end{df}
The above definition can be turned into a difference quotient and we can use it to introduce the \textit{Euler-Lagrange derivative} of $L$.
\begin{df}
	The \textit{Euler-Lagrange derivative} of $L$ is a map $dL:\Ecal\times\Dcal\rightarrow \RR$ defined by
	\[
	\left<dL(\ph),\psi\right>\doteq \lim_{t\rightarrow 0}
 \tfrac{1}{t}\delta L(t\psi)[\ph]\,,
 \] 
 where $\psi\in\Dcal$, $\ph\in\Ecal$. 
\end{df}
Note that $dL$ can be seen as a 1-form on $\Ecal$ (i.e. as a map from $\Ecal$ to $\Dcal'$). The zero locus of $dL$ is the space of \textit{solutions to the equations of motion}, i.e. $\ph\in\Ecal$ is a solution if
\[
dL(\ph)\equiv 0
\]
as an element of $\Dcal'$.

For the free scalar field the Lagrangian is
\be\label{FreeScalar}
L(f)[\ph]=\frac{1}{2}\int_M\left(\nabla_\nu\ph\nabla^\nu\ph-m^2\ph^2\right)fd\mu_g\,,
\ee
where $\mu_g$ is the invariant measure associated with the metric $g$ of $M$. The equation of motion is  
$$dL(\ph)=P\ph=0\,,$$
 where $P=-(\Box+m^2)$ is (minus) the Klein-Gordon operator. On a globally hyperbolic spacetime $M$, $P$ admits  retarded and advanced Green's functions $\Delta^{\rm R}$, $\Delta^{\rm A}$. These are distinguished by the properties that 
\[
P\circ\Delta^{\rm R/A}=\id_{\Dcal}\,\qquad \Delta^{\rm R/A}\circ(P\big|_{\Dcal})=\id_{\Dcal}\]
and
\begin{align*}
\supp(\Delta^{\rm R})&\subset\{(x,y)\in M^2| y\in J_-(x)\}\,,\\
\supp(\Delta^{\rm A})&\subset\{(x,y)\in M^2| y\in J_+(x)\}\,.
\end{align*}
This is illustrated on the diagram below.
\begin{center}
	\begin{center}
		\begin{tikzpicture}[scale=0.5]
		\shade[shading=axis, top color=blue2!80!white,bottom color=white,opacity=0.8] (-2.1,1) -- (-5,-4) -- (6,-4) -- (3,2);
		\shade[shading=axis, bottom color=blue2!80!white,top color=white,opacity=0.8] (-2.2,0.5) -- (-4,5) -- (5,5) -- (3,1);
		\shade[shading=axis,shading angle=90, left color=blue2!80!white,right color=white] plot[smooth cycle] coordinates{(-1,-1) (-2,0)(-2.1,1)(-1,2)(1,2.3)(2,2.5)(3,2)(3,1)(2,0)(1,-1)(0,-1.2)};
		\draw(0,0.5) node {$\supp\,f$};
		\draw(0,-2) node {$\supp\,\Delta^{\rm A}(f)$};
		\draw(0,4) node {$\supp\,\Delta^{\rm R}(f)$};
		\end{tikzpicture}
	\end{center}
\end{center}

\subsubsection{Quantum dynamics}\label{sec:QD}
Following the approach of \cite{BF19}, we can now define the \textit{dynamical $S$-matrix}. Firstly, note that $G=(\Fcal_\loc,\preceq,+,0)$ is a group with causality, if we define $\preceq$ as the ``not in the future of'' relation on supports of functionals. Moreover, it satisfies {\bf L\ref{L1}}, since for $F_1\preceq F_2$, we can find a Cauchy surface $\Sigma$ separating the supports of $F_1$ and $F_2$ and since $\supp F_1$ and $\supp F_2$ are compact, there exist an open neighborhood of $\Sigma$ that can be foliated by Cauchy surfaces. Using this foliation, we can find a partition of unity $1=\sum_{i=1}^{N} \chi_i$, such that $\chi_i\preceq \supp F_2$ for $i=1,\dots,N-1$, $\supp F_1\preceq \chi_i$ for $i=2,\dots,N$ and for $2\leq i+1<j\leq N$ we have $\chi_i\preceq \chi_j$. Using this partition of unity, we can decompose any local functional $F$ in the way prescribed by {\bf L\ref{L1}}.

Let  $L\in \euL$ (this is interpreted as the Lagrangian of the theory). We then define $\fA_{L}$ using the group algebra over $\CC$ of the free group generated by elements $\Scal(F)$, $F\in G$, modulo relations {\bf S\ref{S:loc}
}, {\bf S\ref{S:start}} and the following relation proposed by \cite{BF19} that encodes the dynamics:
\begin{enumerate}[{\bf S1}]
\addtocounter{enumi}{5}
	\item \label{eq:SD} We have:
\[
\Scal(F)\Scal(\delta L(\ph))=\Scal(F^{\ph}+\delta L(\ph))=\Scal(\delta L(\ph))\Scal(F)\,,
\]
where  $F^{\ph}(\psi)\doteq F(\ph+\psi)$, $\ph,\psi\in \Ecal$ and $\delta L$ is given in Def.~\ref{df:delta:L}.
\end{enumerate}

Physically, this relation can be interpreted as the \textit{Schwinger-Dyson equation} on the level of formal S-matrices.

The construction of the dynamical formal S-matrix presented here allows one to construct interacting nets of algebras satisfying the axioms {\bf HK\ref{HK:iso}-HK\ref{HK:ts}}, starting from the given Lagrangian (see \cite{BF19} for details). This, however, is not sufficient to have a complete physical description of the system, since one still needs to identify physically relevant states. The existence of such states (e.g the vacuum state or thermal states on Minkowski spacetime) for a given theory has not been established yet, but a perturbative procedure is known, starting from states of the free theory.

\subsubsection{Concrete models and states}\label{sec:conc:mod}
In this section we outline another construction of the local $S$-matrix starting from a Lagrangian, which is closely related to that of Section~\ref{sec:QD}. The advantage is that it is more explicit and it comes with an obvious prescription how to define states, perturbatively, for the interacting theory. For the purpose of this review, we will treat only compactly supported interactions. For the discussion of adiabatic limit, see for example \cite{BDF,Vienna}.

We work in the functional formalism. First we define the Pauli-Jordan (commutator) function as the difference of the retarded and advanced Green functions:
\[
\Delta\doteq\Delta^{\rm R}-\Delta^{\rm A}\,.
\]
This gives us the Poisson bracket of $F,G\in\Fcal_{\loc}$ by
\[
\{F,G\}\doteq \left<F^{(1)},\Delta G^{(1)}\right>\,
\]
where $\left<.,.\right>$ denotes the dual pairing between $F^{(1)}(\ph)\in\Dcal$ and $\Delta G^{(1)}(\ph)\in \Ecal$ given by the integration over $M$ with the invariant volume measure $d\mu_g$ induced by the metric. This pairing can be extended also to distributions (as we will often do in the formulas that follow). 

The $\star$-product (deformation of the pointwise product) is defined by
\[
(F\star G)(\ph)\doteq \sum\limits_{n=0}^\infty\frac{\hbar^n
}{n!} \left<F^{(n)}(\ph),W^{\otimes n}G^{(n)}(\ph)\right>\ ,
\]
where $W$ is the 2-point function of some \textit{quasifree Hadamard state} and it differs from $\frac{i}{2}\Delta$ by a symmetric bidistribution:
\[
W=\frac{i}{2}\Delta+H\,.
\]
Without going into technical details, we note that $W$ has the following crucial properties:
\begin{enumerate}[\bf (H 1)]
	\item  $\Delta=2\operatorname{Im} (W)$.
	\item $W$ is a distributional bisolution to the field equation, i.e. $\left<W,Pf\otimes g\right>=0$ and $\left<W,f\otimes Pg\right>=0$ for all $f,g\in\Dcal^{\Co}$.
	\item $W$ is of positive type, meaning that $\left< W,\bar{f}\otimes f\right>\geq 0$, where $\bar{f}$ is the complex conjugate of  $f\in\Dcal^{\Co}$.
	\item \label{WFcond} The wavefront set of $W$ satisfies:
	\[
	\WF(W)=\{(x,k;x',-k')\in \dot{T}^*M^2|\\(x,k)\sim(x',k'), k\in (\overline{V}_+)_x\}\,,
	\]
	where $(\overline{V}_+)_x$ is the closed future lightcone in $T^*_xM$ and	
	the equivalence relation $\sim$ means that there exists a null geodesic strip such that both $(x,k)$ and $(x',k')$ belong to it. Recall that a null geodesic strip is a curve in $T^*M$ of the form $(\gamma(\lambda),k(\lambda))$, $\lambda\in I\subset \RR$, where $\gamma(\lambda)$ is a null geodesic parametrized by $\lambda$ and $k(\lambda)$ is given by $k(\la)=g(\dot{\gamma}(\la),\cdot)$. See \cite{Rad} for a details.
\end{enumerate} 
\begin{df}\label{df:Microcausal}
	A functional $F\in\Ci(\Ecal,\RR)$ is called \emph{microcausal} ($F\in \Fcal_{\mc}$) if it is compactly supported and satisfies
	\be\label{mlsc}
	\WF(F^{(n)}(\ph))\subset \Xi_n,\quad\forall n\in\NN,\ \forall\ph\in\Ecal\,,
	\ee
	where $\Xi_n$ is an open cone defined as 
	\be\label{cone}
	\Xi_n\doteq T^*M^n\setminus\{(x_1,\dots,x_n;k_1,\dots,k_n)\,|\, (k_1,\dots,k_n)\in (\overline{V}_+^n \cup \overline{V}_-^n)_{(x_1,\dots,x_n)}\}\,,
	\ee
	where $(\overline{V}_{\pm})_x$ is the closed future/past lightcone.
\end{df}
It follows from {\bf H\ref{WFcond}} and the definition~\ref{df:Microcausal} of microcausal functionals, that the star product $\star$ can be extended to $\Fcal_{\mc}[[\hbar]]$. 

To illustrate concepts presented in section \ref{prg} on a concrete example, consider the space of Laurent series in the formal parameter $\hbar$ with coeffcients in formal power series in $\lambda$, i.e.
$\fA=(\Fcal_\mc[[\lambda]]((\hbar)),\star_H)$. The relevant causality relation is defined as follows: 
\[
F\dashv G\  \textrm{iff}\ \supp(F)\preceq \supp(G)\,.
\]
The relevant locality relation $\tang$ is the disjointness of supports of functionals.

Set $\Vcal\equiv\Fcal_\loc$ and for $V\in\Fcal_{\loc}$, set
\[
\Scal(\lambda V)\doteq e_\TT^{i\lambda V/\hbar}=\1+\sum_{n=1}^{\infty} \left(\frac{i\lambda}{\hbar}\right)^n\frac{1}{n!}\TT_n(V^{\otimes n})
\,,
\]
as the formal $S$-matrix. For $\Scal$ given in terms of $\TT_{n}$s to satisfy {\bf S\ref{S:loc}}, we need to require that:
\begin{enumerate}[{\bf T1}]
	\item \label{factorisation:T} We have:
	\[
	\TT_n(V_1,\dots, V_n)=\TT_k(V_1,\dots, V_k)\star \TT_k(V_{k+1},\dots, V_n)\,,
	\]
	if $\supp V_{k+1},\dots,\supp V_n$ are not later than $\supp V_{1},\dots,\supp V_k$. 
\end{enumerate}
This is the key property (called \textit{causal factorisation}) for setting up the inductive procedure of Epstein and Glaser (see \cite{EG} and a comprehensive review \cite{Due19}) that allows one to prove the existence of formal S-matrices in the sense of definition~\ref{Smatrix}.

To construct an explicit example of a local S-matrix, some further structures are needed. Given $W$, one defines the corresponding Feynman propagator by
\[
\Delta^{\rm F}=\frac{i}{2}(\Delta^{\rm A}+\Delta^{\rm R})+H
\]
Let $F_1,\dots,F_n\in\Fcal_{\loc}^{\tang k}$, where $\tang$ is disjointness of supports and we use the notation of definition~\ref{df:localityset}. The $n$-fold time-ordered product of $F_1,\dots,F_n$ is defined by
\[
\TT_n(F_1,\dots,F_n)=m_n\circ e^{\sum_{1\leq i<j\leq n}D^{\rm F}_{ij}}(F_1\otimes\dots\otimes F_n)\,,
\]
where 
\[
D^{\rm F}_{ij}\doteq \left\langle\Delta^{\mathrm{F}},\frac{\delta}{\delta \ph_i}\otimes\frac{\delta}{\delta\ph_j}\right\rangle
\]
and $m_n$ is the 	a pullback through the diagonal map $\Ecal\rightarrow\Ecal^{\otimes n}$, $\ph\mapsto \ph\otimes\dots\otimes \ph$. 

	The family of $n$-fold time-ordered products defined above can be used to construct a locality algebra  $(\Tcal(\Fcal^{\rm pds}_{\ml}[[\hbar]]),\tang,\T,+,0,1)$, where:
	\begin{itemize}
		\item $\Fcal^{\rm pds}_{\ml}$ is the space of functionals that arise as finite sums of local functionals and products of local functionals with $F(0)=0$ and pairwise disjoint supports.
		\item The map  $\Tcal$ is defined by
	\be\label{eq:Tmap}
	\Tcal=\bigoplus_{n=0}^\infty \TT_n\circ \beta\,,
	\ee
	where $\beta$ is the inverse of multiplication on multilocal functionals (as defined in \cite{FR3})
	\item The product is:
	\[
	F\T G= \TT(\TT^{-1}F\cdot \TT^{-1} G)= \sum\limits_{n=0}^\infty\frac{\hbar^n
	}{n!} \left<F^{(n)},(\Delta^{\rm F})^{\otimes n}G^{(n)}\right>\,.
	\]
	Note that, as in \cite{CGPZ}, the algebra product is defined only for $\tang$-related elements. 
	\end{itemize} 
To construct $\Scal:\lambda \Fcal_{\loc}[[\lambda]]\rightarrow \Fcal_{\mc}[[\lambda]]((\hbar))$, one needs to extend the maps $\Tcal_n$ to arbitrary local functionals (i.e. the full $\Fcal_{\loc}^{n}$). This has been done, under some additional assumptions on  $\Tcal_n$s in \cite{HW}. There, it has been shown that maps $\Tcal_n:\Fcal_{\loc}^{n}\rightarrow \fA$ satisfying {\bf T\ref{factorisation:T}} and some further, physically motivated, properties exist, but are not unique. The ambiguity of extensions of $\TT_n$s is governed by the renormalization group $\euR$ (as two $S$-matrices have to differ by $\Zcal\in\euR$ by virtue of Theorem~\ref{thm:main}).

Given such extended (i.e.renormalized) $\TT_ns$, one constructs the algebra $$(\TT(\Fcal_{\ml}[[\hbar]]),\T,+,0,1)\,,$$ where $\Fcal_{\ml}$ is analogous to $\Fcal^{\rm pds}_{\ml}$, but without the supports restriction and $\TT$ is the same as in \eqref{eq:Tmap} (see \cite{FR3}).

States on $\fA$ can be defined using evaluation functionals. For example, the evaluation at $\ph=0$ corresponds to the expectation value in the quasifree Hadamard state, whose 2-point function $W$ has been used to define $\star$.

In the next step one introduces the interacting fields. Given interaction $V\in\Fcal_\loc$ and $F\in \Fcal_{\loc}$, we introduce the \textit{relative S-matrix} using Def.~\ref{def:relative}, i.e.
\[
\Scal_{\lambda V}(\mu F)\doteq \Scal(\lambda V)^{-1}\star\Scal(\lambda V +\mu F)\,,
\]
understood as a formal power series in both $\lambda$ and $\mu$ and a Laurent series in $\hbar$.

The formal $S$-matrix is a generating functional for the interacting fields. We define the interacting observable corresponding to $F$ by
\[
F_{\rm int}\doteq -i\hbar \frac{d}{d\mu} \Scal_{\lambda V}(\mu F)\big|_{\mu =0}\,.
\]
Correlation functions of interacting observables corresponding to $F_1,\dots,F_n$ can then be easily computed by means of 
\[
\omega_{\rm int}(F_1,\dots,F_n)=
({F_1}_{\rm int}\star\dots\star {F_n}_{\rm int})\big|_{\ph =0}\,.
\]
\section{Conclusions}
In this paper we have shown how the intuitive notion of ``locality'' in classical and quantum field theory is nicely captured by the appropriate generalization of the Hammerstein property. We hope that further investigation of structures with that property will lead to a better understanding of locality in physics.
\section*{Acknowledgments}
I would like to thank Christian Brouder, Pierre Clavier and Sylvie Paycha for fascinating discussions about locality that we had in Potsdam and which prompted me to think about this structure again. I would also like to thank Camille Laurent-Gengoux for discussions and hospitality in Paris and finally Marco Benini, Michael D\"utsch, Klaus Fredenhagen and Alexander Schenkel for some very useful comments. Part of this research has been conducted while visiting the Perimeter Institute for Theoretical Physics (Waterloo), it was also supported by the EPSRC grant \verb|EP/P021204/1|.

\bibliographystyle{amsplain}
\bibliography{References}

\end{document}